 \newcommand\numberthis{\addtocounter{equation}{1}\tag{\theequation}}
 \newcommand\blfootnote[1]{%
   \begingroup
   \renewcommand\thefootnote{}\footnote{#1}%
   \addtocounter{footnote}{-1}%
   \endgroup
 }
\newtheorem{theorem}{Theorem}[section]
\newtheorem{lemma}{Lemma}[section]
\newtheorem{remark}{Remark}[section]
\newtheorem{example}{Example}[section]
\newtheorem{definition}{Definition}[section]
\numberwithin{equation}{section}
\newif\if@seccntdot
\pretocmd{\section}{\@seccntdottrue}{}{}
\pretocmd{\subsection}{\@seccntdottrue}{}{}
\apptocmd{\@xsect}{\@seccntdotfalse}{}{}
\def\@seccntformat#1{%
  \csname the#1\endcsname
  \if@seccntdot .\fi
  \quad
}
\begin{document}
\renewcommand{\abstractname}{\vspace{-\baselineskip}}

\title{OPTION PRICING WITH DELAYED INFORMATION}

\author{
\vspace{4 mm} \\
Tomoyuki Ichiba \\
 {\normalsize University of California, Santa Barbara}\\
  {\em ichiba@pstat.ucsb.edu} \\
 \vspace{3 mm} \\
 Seyyed Mostafa Mousavi  \\
  {\normalsize University of California, Santa Barbara}\\
  {\em mousavi@pstat.ucsb.edu}}

\date{}

\maketitle
\bigskip

 \begin{abstract} 
 \blfootnote{The authors would like to thank Jean-Pierre Fouque, Mike Ludkovski, Yuri Saporito and Andrey Sarantsev for several helpful discussions and feedbacks at different stages of the work. The first author is supported in part by NSF grants DMS 1313373 and 1615229.}
 We propose a model to study the effects of delayed information on option pricing. We first talk about the absence of arbitrage in our model, and then discuss super replication with delayed information in a binomial model, notably, we present a closed form formula for the price of convex contingent claims. Also, we address the convergence problem as the time-step and delay length tend to zero and introduce analogous results in the continuous time framework. Finally, we explore how delayed information exaggerates the volatility smile.  \vspace{1mm} \\
 \textit{\footnotesize Keywords: Delayed information, binomial model, continuous-time limit, incomplete market, super replication, volatility smile.}
 \end{abstract}
 
\section{Introduction}
All participants in financial markets have access only to delayed information. Delay adds more uncertainty to the market, and it is of great importance to study it. A universal assumption in options pricing literature is that a trader makes his decisions with full access to the prices of the assets (i.e, no delayed information). However, in practice, there is a lag between when the order is decided and its execution time. In particular, there are two important types of delays in financial markets. First is the delay in order execution, that is, the order would be executed with some delay after the trader places it. For example, if the order is made in the morning, it would be executed in the afternoon. Second is the delay in receiving information,  that is, the trader observes the prices and other important information with some delay, usually because of the technological barriers, exacerbated by having long physical distance from the exchange.\par
In the view of traders, these two types of delays act similarly. In both cases, orders are executed with prices which are unknown at the time they are made. In other words, the source of the delayed information does not change the decisions of the trader. For example, let $\{0, 1,\dots \}$ be a discrete trading horizon. If there exists a delay with length of $1$ period, then regardless of what the source of delay is, no trade happens at time $0$, and in later times  trades happen based on the information available up until the previous period. The reason is that if the delay is only in receiving information, then, at time $0$ the trader does not have any information, so he waits till time $1$ to get time-$0$ prices to make a trade and those trades would of course be executed with time-$1$ prices. If the delay is only in order execution, then at time-$0$ and based on time-$0$ prices, the trader makes an order, but that order
 would be executed with time-$1$ prices. \par
 
 In this work, we start with the binomial model proposed by \cite{cox1979option} and consider fixed periods of delay in the flow of information. Therefore, agents have an information stream smaller than the information flow of the traded asset. We show that the market with delayed information is incomplete, and it is not possible to perfectly replicate most contingent claims. Incomplete markets pose various challenges and for a review of different approaches, we refer to \cite{staum2007incomplete}. We take the worst case scenario approach, that is super replication, to price and replicate convex contingent claims. This approach is first suggested by \cite{el1995dynamic} in their seminal paper. We derive recursive and closed-form formulas for pricing convex contingent claims in the discrete time model. Later, we study the continuous time limit as the time-step and delay length tend to zero. We show that the price process under our pricing measure converges to the Black-Scholes price process, but with enlarged volatility.\par 
 
 A very interesting aspect of our model is the way it shows how delayed information affects the volatility smile. Our model confirms the intuition of traders that delayed information would exaggerate the volatility smile, but it does not cause it. We show that in the continuous limit, volatility is constant and there is no smile, but in the discrete model, we can observe volatility smile. In other words, it suggests the idea that the smile observed in the market might not all be by the market itself, and it could have been exaggerated because of the way we interact with delayed information.\par 
 
 Our model with delayed information has some similarities with the models with transaction costs, notably in both models, we encounter similar limit theorems and both risky asset price processes converge to the Black-Scholes price process with some enlarged volatility. In other words, enlarging volatility can be considered as the way to take into account both transaction costs and delayed information. \cite{leland1985option} is first to discuss transaction costs in option pricing models. \cite{boyle1992option} studies transaction costs in binomial models, and \cite{kusuoka1995limit} provides rigorous limit theorems for such models. Some recent works in this area are \cite{bank2016super}, \cite{bank2015scaling} and \cite{dolinsky2016convex} For extensive literature on option pricing with transaction costs, we refer to \cite{kabanov2009markets}.  \par 
 
\cite{kabanov2006dalang} provides an absence of arbitrage condition in discrete time models with delayed information. \cite{kardaras2013generalized} studies market viability in scenarios that the agent has delayed or limited information. Also, \cite{bouchard2015arbitrage}, \cite{burzoni2016pointwise} and \cite{burzoni2014universal} are some very related works in discrete time arbitrage theory.  \par 
 In the literature, in markets with delayed information, risk-minimizing hedging strategies, which is another hedging approach in incomplete markets, have been studied. Using this approach, \cite{di1995hedging} models lack of information by letting the assets to be observed only at discrete times, and \cite{schweizer1994risk} presents the general case of restricted information. Some other works in this direction are \cite{frey2000risk}, \cite{mania2008mean}, \cite{kohlmann2007mean} and \cite{ceci2017follmer}. \par 

The paper is organized as follows. In section \ref{discretemodel}, we set up the discrete time model with delayed information and define the super-replication price. We discuss the super-replicating strategy in an $N$-period binomial model with $H=N-1$ periods of delay in subsection \ref{sec1}, and we generalize the results to an $N$-period binomial model with $H$ periods of delay in subsection \ref{sec2} using both dynamic programming and direct approaches. A geometrical representation of the strategy is presented in subsection \ref{geometry}. In section \ref{continuous}, we study the asymptotic behavior of the model as the time step and delay length tend to zero. In particular, subsection \ref{smile} is devoted to the discussion of how delayed information affects the volatility smile.

 \section{Discrete Time Model}
 \label{discretemodel}
 Before introducing delays, let us recall the $\, N$-period binomial tree model of \cite{cox1979option} for a financial market with a single risky asset and a single risk-free asset (e.g., stock). Given $\, N \in \mathbb N\, $, let us denote by $\, (\Omega,\mathcal{F},\mathbb{P})\, $ a probability space for the canonical space $\, \Omega\, :=\, \{0,1\}^{N}\, $ of $\,N$-period binomial tree with the Borel $\, \sigma$-algebra $\, \mathcal{F}\, $ generated by $\, \Omega\, $. For every $\, \omega \, :=\, 	(\omega_{1},\dots,\omega_{N})\in \Omega\, $ we define a coordinate map by $\, Z_{k}(\omega)\, =\, \omega_{k}\, $ for each  $k \, =\, 1,\dots,N $. Let $\, \mathbb{P}\, $ be the probability measure under which $\, Z_{k}\, $, $\, k \, =\,  1, \ldots , N$ are independent, Bernoulli random variables with $\, \mathbb{P}(Z_{k}=1)\, =\, \mathbb{P}(Z_{k}=0)\, =\, 1\, / \, 2\, $, $\, k \, =\,  1,\dots,N \, $. Define the filtration $ \, \mathfrak F \, :=\,  \{\mathcal{F}_{k}\,,  k = 0, \ldots , N\}\, $, where $\, \mathcal F_{k}\, $ is the  $\, \sigma$-field $\, \sigma(Z_{1},\dots,Z_{k})\, $ generated by the first $\, k\, $ variables for $\, k=1,\dots,N\, $ and $\, \mathcal{F}_{0}\, $ is the trivial $\sigma$-field, i.e., $\mathcal{F}_{0}=\{\emptyset,\Omega\}$. \par
 
 In the $\, N$-period binomial tree model, the risky asset price $\, S_{k}: \Omega \to \mathbb R \, $ and its discounted price $\widetilde{S}_{k}:\Omega \rightarrow \mathbb{R}$, discounted by instantaneous rate $ r > 0$, at time $\, k\, $, are defined by 
 \begin{eqnarray}
 \label{priceprocess}
 S_{k} (\omega) \, :=\, S_{0}\, u^{I_{k}(\omega)}\, d^{k-I_{k}(\omega)} \, , \quad I_{k} (\omega) \, :=\, \sum_{l=1}^{k}Z_{l}(\omega) \, , \quad \widetilde{S}_{k} (\omega) \, :=\, e^{-rk}\, S_k (\omega)\, , \quad k \, =\,  1, \ldots , N \, , 
 \end{eqnarray}
 where $\, S_{0}\, $ is a given initial price of risky asset at time $\, 0\, $, and $\, u\, $ (or $\, d\, $) is a fixed ratio by which the price process goes up (or down) in one period with $\, u > 1+r > d > 0 \,$. The price processes are adapted to the filtration $\,\mathfrak F\,$. 
 

 \subsection{Delayed Filtration}
 We shall introduce delays in the flow of information in the $\, N$-period binomial model. For simplicity, let us consider the situation where an investor sends buy or sell orders to the market at time $\,t\,$, but her orders are not executed until time $\,t + H\,$ with $\, H \in \{0,\dots,N-1\}\, $ {\it delay periods}. The investor herself knows that she has $H$ delay periods when she is sending orders. Then we define the {\it delayed filtration} $\, \mathfrak{G} \, :=\, \{\mathcal G_{k}, k \, =\, 0, 1, \ldots , N\} \, $, where $\,\mathcal {G}_{k} \, := \, \mathcal{F}_{0}\, $, for $\, k \, =\, 0\, , \, \dots\, , H-1\,$, and 
 \begin{eqnarray}
 \mathcal {G}_{k} \, := \, \mathcal{F}_{k-H}\, , \quad k \, =\,  H\, , \, \dots\, , N\, . \quad 
 \end{eqnarray}
 In other words, $\, \mathcal {G}_{k} \, $ is the information set of the price process until time $\, \min ( k-H, 0) \, $, rather than time $\, k\, $.
 In the following, we shall consider investments based on this delayed information.  \par
 
 Let $\, \mathcal{A}_{\mathfrak{G}}\, $ be the set of all $\, \mathfrak{G}\, $-adapted stochastic processes $\, \Delta\,  :=\, \{\Delta_{k}\, , \, k \, =\,  0, \ldots , N-1\}\, $ with $\, \Delta_{k}\, \equiv \, 0\, $, $\, k \, =\, 0, \, \dots\, ,\, H-1\, $. Here, each $\, \Delta \in \mathcal{A}_{\mathfrak{G}}\, $ represents a strategy for this investor based on the delayed information, that is, the positive $\, \Delta_k > 0\, $ (the negative $\, \Delta_{k} < 0\, $, respectively) corresponds to the total number of shares of the risky asset that the investor decides to own (to owe, respectively) at time $\, k\,$, given information $\, \mathcal{G}_{k}\,$. In other words, the order made at time $\, k-H \,$ to buy or sell $\, (\Delta_k - \Delta_{k-1})$ shares of the risky asset, gets executed at time $\, k \,$  with price $\, S_{k}\, $ (not $\, S_{k-H}\, $), because of $\, H\, $ periods of delay. Thus the investor has to deal with the risk of price changes between the time of order submission and execution. 
 
 For an initial investment of $\, x_{0}\, $ in the risk free asset and a strategy $\, \Delta \in \mathcal{A}_{\mathfrak{G}} \, $, we shall consider the portfolio value process $\, V_{k}(x_{0},\Delta) (\omega) \, $, $\, k \, =\, 0,\dots,N\, $, $\,\omega \in \Omega\,$. The first order $\, \Delta_H\, $ submitted at time $\, 0\, $ is executed at time $\, H\, $, and  the portfolio value process is not observed until time $\, H\, $. Thus we define
 \begin{equation}
 \label{initialportfolio}
 V_{H}(x_{0}, \Delta)(\omega) \, :=\,  x_{0} \cdot e^{rH} + \Delta_{H} \cdot {S}_{H}(\omega) \, , \quad V_{0} (x_{0}, \Delta)(\omega)  \, :=\,  e^{-rH} \cdot V_{H}(x_{0},\Delta)(\omega)  \, =\, x_{0} + \Delta_{H} \cdot \widetilde{S}_{H}(\omega) \, , 
 \end{equation}
 and in general 
 \begin{equation}
 \label{portfolio} 
 V_{k}(x_{0},\Delta) (\omega) \, := \, \left \{ \begin{array}{l} 
 e^{-r(H-k)}\cdot V_{H}(x_{0},\Delta)\left(\omega\right), \hspace{3 mm} k=0,\dots,H-1\, , \\ \\
 \, e^{rk} x_{0}+\sum\limits_{l=H}^{k-1}S_{l}(\omega)  \cdot \left(\Delta_{\left(l-1\right) \vee H }-\Delta_{l}\right)+S_{k} (\omega)\cdot \Delta_{(k-1) \vee H}, \hspace{3 mm} k=H,\dots,N \, . 
 \end{array} \right . 
 \end{equation}
 For $\, k \, =\, H,\dots,N\, $, the first term in the portfolio value process ($e^{rk} x_{0}$) in (\ref{portfolio}) corresponds to the initial investment in the risk free asset. The second term ($\sum_{l=H}^{k-1}S_{l}(\omega)  \cdot (\Delta_{\left(l-1\right) \vee H }-\Delta_{l})$) is due to the cash flow in the risk free asset up until time $k$, and the third term ($S_{k} (\omega)\cdot \Delta_{(k-1) \vee H}$) relates to the investment in the risky asset at time $k$. We call $V_k(x_0,\Delta)$, $k=0,\dots,N$ the value process from the strategy $(x_0,\Delta) \in (\mathbb R, \mathcal{A}_{\mathfrak{G}})$.\par 
 
 By construction, the changes in the portfolio value process ($V_{k}(x_{0},\Delta)$) in (\ref{portfolio}) starting from its first realization at time $H$, are only due to the variation in asset prices. In other words, no money is added to or withdrawn from the portfolio.\par 
 
 Note that the initial portfolio value $V_{0} (x_{0}, \Delta)(\omega)$ in (\ref{initialportfolio}) is a random variable, not a constant. This is because it is defined by discounting the time-$H$ portfolio value $V_{H}(x_{0}, \Delta)(\omega)$, which is the first time the portfolio value is observed due to the existence of delay.\par 
 
 For $\, k \, =\, H,\dots,N\, $, $\, \Delta_{k} \, $ is $\,  \mathcal{G}_{k}\, $-measurable, but $\, V_{k}(x_{0},\Delta) \, $ is $\, \mathcal{F}_{k}$-measurable. Thus $V_{k}(x_{0},\Delta)$ is $\, \mathcal{F}_{k \vee H}$-measurable for $\,k \, =\, 0, \ldots , N\,$.  In this sense, the portfolio is constructed based on the delayed information.

 \subsection{Absence of Arbitrage}
 We shall first introduce the notion of arbitrage in our model. In general, arbitrage means that one cannot reap any benefit for free, that is without taking any risk. In our model with delayed information, as it is shown in (\ref{initialportfolio}), the initial portfolio value $V_{0}(x_0,\Delta)$ is a random variable, because of the existence of delay. Therefore, we need to adjust the classical notion of arbitrage in the domain of $(\mathbb{R},\mathcal{A}_{\mathfrak{G}})$ strategies, to take this into account. 
 \theoremstyle{definition}
 \begin{definition}[Arbitrage] 
 \label{arbitrage}
 An arbitrage opportunity is the strategy $(x_0,\Delta) \in (\mathbb{R},\mathcal{A}_{\mathfrak{G}})$ such that 
 \begin{eqnarray}
 \max\limits_{\omega \in \Omega}\{V_{0}(x_{0},\Delta)\left(\omega\right)\} &=& 0, \nonumber \\
 \mathbb{P}(V_{N}(x_{0},\Delta)\geq 0) &=& 1, \\
 \mathbb{P}(V_{N}(x_{0},\Delta) > 0) &>& 0. \nonumber
  \end{eqnarray}
 \end{definition}
 The primary difference with the classical definition of arbitrage is the condition that the \emph{maximum} of time-$0$ portfolio value needs to be zero ($\max\limits_{\omega \in \Omega}\{V_{0}(x_{0},\Delta)\left(\omega\right)\}=0$). It is obvious that in the case of complete information (i.e, $H=0$), this definition boils down to the classical definition of arbitrage opportunity. \par
 We need to show that there is no arbitrage in our discrete time model with delayed information. \cite{kabanov2006dalang} proves that in a general discrete time model with restricted information, there does not exist classical arbitrage, if and only if there exists a probability measure $\widetilde{\mathbb{P}}$ equivalent to $\mathbb{P}$ such that the optional projection under $\widetilde{\mathbb{P}}$ of the discounted stock price on the delayed filtration, is a $\widetilde{\mathbb{P}}$-martingale The setup of our model is a bit different than that in  \cite{kabanov2006dalang}, given that our first order to buy/ sell the risky asset is executed at time $H$, rather than at time $0$ (i.e. $\Delta_{k}=0$, $k=0,\dots,H-1$). This makes the initial portfolio value ($V_{0}(x_{0},\Delta)$) a random variable, rather than always a constant. Theorem \ref{NACondition} shows that still in our model, there does not exists arbitrage, in the sense of Definition \ref{arbitrage}.  
 \begin{theorem}
 \label{NACondition}
 There does not exists any arbitrage opportunity in our discrete time model, in the domain of $(\mathbb{R},\mathcal{A}_{\mathfrak{G}})$ strategies.
 \end{theorem}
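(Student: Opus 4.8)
The plan is to reduce the statement to a martingale argument in the spirit of the Kabanov--Stricker criterion quoted above, but carried out directly so as to accommodate the modified time-$0$ condition $\max_{\omega}V_{0}(x_{0},\Delta)(\omega)=0$. First I would fix the binomial risk-neutral measure $\widetilde{\mathbb{P}}$: the measure under which $Z_{1},\dots,Z_{N}$ remain independent but with $\widetilde{\mathbb{P}}(Z_{k}=1)=q:=(e^{r}-d)/(u-d)$. The model's parameter restriction on $u,d,r$ places $q$ strictly in $(0,1)$, so $\widetilde{\mathbb{P}}\sim\mathbb{P}$, and a one-line computation shows $\{\widetilde{S}_{k}\}_{k=0}^{N}$ is an $\mathfrak{F}$-martingale under $\widetilde{\mathbb{P}}$. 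Its optional projection onto $\mathfrak{G}$ is then $\widetilde{S}_{(\cdot)-H}$, a $\widetilde{\mathbb{P}}$-martingale, so $\widetilde{\mathbb{P}}$ already satisfies the condition of \cite{kabanov2006dalang}; since that theorem is stated for the classical arbitrage notion, I prefer to argue directly with $\widetilde{\mathbb{P}}$.

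The central step is to upgrade this to a martingale statement for the value process itself. Invoking the self-financing property recorded just after (\ref{portfolio}) — that the increments of $V$ arise only from price moves — the discounted value obeys
\[
\widetilde{V}_{k+1}(x_{0},\Delta)-\widetilde{V}_{k}(x_{0},\Delta)=\Delta_{k}\bigl(\widetilde{S}_{k+1}-\widetilde{S}_{k}\bigr),\qquad k=H,\dots,N-1 .
\]
Here the delay is used decisively: $\Delta_{k}$ is $\mathcal{G}_{k}=\mathcal{F}_{k-H}$-measurable, hence a fortiori $\mathcal{F}_{k}$-measurable, so it pulls out of the conditional expectation and $\mathbb{E}_{\widetilde{\mathbb{P}}}[\widetilde{V}_{k+1}-\widetilde{V}_{k}\mid\mathcal{F}_{k}]=\Delta_{k}\,\mathbb{E}_{\widetilde{\mathbb{P}}}[\widetilde{S}_{k+1}-\widetilde{S}_{k}\mid\mathcal{F}_{k}]=0$. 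Thus $\{\widetilde{V}_{k}\}_{k=H}^{N}$ is an $\mathfrak{F}$-martingale under $\widetilde{\mathbb{P}}$. Coupling this with $\widetilde{V}_{k}\equiv\widetilde{V}_{H}$ for $k\le H$ (the discounted value is frozen on $[0,H]$ by (\ref{initialportfolio})) and with $\widetilde{V}_{0}=V_{0}$, I obtain the key identity
\[
\mathbb{E}_{\widetilde{\mathbb{P}}}[\widetilde{V}_{N}(x_{0},\Delta)]=\mathbb{E}_{\widetilde{\mathbb{P}}}[\widetilde{V}_{H}(x_{0},\Delta)]=\mathbb{E}_{\widetilde{\mathbb{P}}}[V_{0}(x_{0},\Delta)]=x_{0}+\Delta_{H}S_{0},
\]
the last equality using that $\Delta_{H}$ is $\mathcal{F}_{0}$-measurable (a constant) and $\mathbb{E}_{\widetilde{\mathbb{P}}}[\widetilde{S}_{H}]=S_{0}$.

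With this identity the contradiction is immediate. Suppose $(x_{0},\Delta)$ were an arbitrage. The first requirement $\max_{\omega}V_{0}(\omega)=0$ forces $V_{0}(\omega)\le0$ for every $\omega\in\Omega$, so $\mathbb{E}_{\widetilde{\mathbb{P}}}[V_{0}]\le0$. On the other hand $\mathbb{P}(V_{N}\ge0)=1$ and $\mathbb{P}(V_{N}>0)>0$, combined with $\widetilde{\mathbb{P}}\sim\mathbb{P}$, give $\widetilde{\mathbb{P}}(\widetilde{V}_{N}\ge0)=1$ and $\widetilde{\mathbb{P}}(\widetilde{V}_{N}>0)>0$, whence $\mathbb{E}_{\widetilde{\mathbb{P}}}[\widetilde{V}_{N}]>0$. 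These two bounds are incompatible with the displayed identity, so no such $(x_{0},\Delta)$ can exist.

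The step I expect to be the main obstacle is the martingale identity for $\widetilde{V}$: one must verify carefully, from the piecewise definition (\ref{portfolio}), that the discounted increments indeed collapse to $\Delta_{k}(\widetilde{S}_{k+1}-\widetilde{S}_{k})$, tracking the truncations $(l-1)\vee H$ and the fact that the process is born only at time $H$ with the \emph{random} value $V_{H}$. The genuine novelty, as against the classical binomial fundamental theorem, lies entirely in this $\mathcal{F}_{H}$-measurable starting value: it is precisely what makes the modified condition $\max_{\omega}V_{0}=0$ the correct surrogate for the constraint $V_{0}=0$ of the complete-information case, and the inequality $\mathbb{E}_{\widetilde{\mathbb{P}}}[V_{0}]\le0$ is the only place where the maximum (rather than a pointwise equality) is needed.
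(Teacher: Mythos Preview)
Your argument is correct. Both you and the paper build the same binomial risk-neutral measure $\widetilde{\mathbb{P}}$ (under which $\widetilde{S}$ is an $\mathfrak{F}$-martingale) and both exploit that $\mathcal{G}_{k}\subseteq\mathcal{F}_{k}$, but the routes diverge after that. The paper splits $\Omega$ into the $(H{+}1)$ sub-trees starting at time $H$, observes that the condition $\max_{\omega}V_{H}=0$ makes the time-$H$ value nonpositive in each sub-tree, and then invokes the Kabanov--Stricker theorem as a black box on each $(N{-}H)$-period model via the optional-projection condition (\ref{riskmeasure}). You instead run a single self-contained expectation argument: the self-financing identity $\widetilde{V}_{k+1}-\widetilde{V}_{k}=\Delta_{k}(\widetilde{S}_{k+1}-\widetilde{S}_{k})$ together with the $\mathcal{F}_{k}$-measurability of $\Delta_{k}$ makes $\{\widetilde{V}_{k}\}_{k\ge H}$ an $\mathfrak{F}$-martingale under $\widetilde{\mathbb{P}}$, and the contradiction drops out of $\mathbb{E}_{\widetilde{\mathbb{P}}}[\widetilde{V}_{N}]=\mathbb{E}_{\widetilde{\mathbb{P}}}[V_{0}]\le 0$ versus $\mathbb{E}_{\widetilde{\mathbb{P}}}[\widetilde{V}_{N}]>0$. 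Your approach is more elementary---no external theorem is needed---and it handles the random $\mathcal{F}_{H}$-measurable starting value $V_{H}$ cleanly in one stroke, whereas the paper's reduction to sub-models is slightly informal on exactly this point (the initial values in the sub-models are $\le 0$ rather than $=0$). The paper's route, on the other hand, makes the connection to the general delayed-information no-arbitrage criterion of \cite{kabanov2006dalang} explicit. Your identification of the self-financing increment formula as the place requiring care is accurate; it follows from (\ref{portfolio}) by a telescoping computation.
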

 \begin{proof}
 \label{arbitrageproof}
 According to Definition \ref{arbitrage}, absence of arbitrage means that for any strategy $(x_0,\Delta) \in (\mathbb{R},\mathcal{A}_{\mathfrak{G}})$ such that $\max\limits_{\omega \in \Omega}\{V_{0}(x_{0},\Delta)\left(\omega\right)\} = 0$, the condition $\mathbb{P}(V_{N}(x_{0},\Delta)\geq 0) = 1$ implies that $\mathbb{P}(V_{N}(x_{0},\Delta)= 0) = 1$.\par
 In the domain of $(\mathbb{R},\mathcal{A}_{\mathfrak{G}})$, according to (\ref{initialportfolio}), the condition $\max\limits_{\omega \in \Omega}\{V_{0}(x_{0},\Delta)\left(\omega\right)\} = 0$ is equivalent to
 \begin{eqnarray}
 \max\limits_{\omega \in \Omega}\{V_{H}(x_{0},\Delta)\left(\omega\right)\} &=& 0,  \nonumber
 \end{eqnarray}
which means that in all $(N-H)$-period binomial models starting from time $H$, the initial values for the $(x_0,\Delta)$ strategy are non-positive.\par 
 
 If we consider all these $(N-H)$-period binomial models individually, they lie in the general discrete time model framework in \cite{kabanov2006dalang}. Therefore, in each of these models, even if we consider the initial values of the strategy to be zero, the condition $\mathbb{P}(V_{N}(x_{0},\Delta)\geq 0) = 1$ implies $\mathbb{P}(V_{N}(x_{0},\Delta)= 0) = 1$, given that we show that there exists a probability measure $\widetilde{\mathbb{P}} \sim \mathbb{P}$ such that the $\widetilde{\mathbb{P}}$-optional projection of the discounted stock price on the delayed filtration, is a $\widetilde{\mathbb{P}}$-martingale,  that is 
 \begin{eqnarray}
 \label{riskmeasure}
 \mathbb{E}^{\widetilde{\mathbb{P}}}\left(\widetilde{S}_{k+1} \lvert \mathcal{G}_{k} \right)=\mathbb{E}^{\widetilde{\mathbb{P}}}\left(\widetilde{S}_{k} \lvert \mathcal{G}_{k} \right), \quad k=H,\dots,N-1. 
 \end{eqnarray}  
 Define the probability measure $\widetilde{\mathbb{P}}$ such that the coordinate maps $Z_{k}, k = 1, \ldots , N$ are still independent Bernoulli random variables, but with parameters 
 \begin{eqnarray}
 \mathbb{\widetilde{P}}(Z_{k}=1)=\frac{u e^{r}-d}{u-d}=1-\mathbb{\widetilde{P}}(Z_{k}=0), \quad k \in \{1,\dots,N\}, \nonumber  
 \end{eqnarray}
 which are the risk-neutral probabilities in the usual binomial model without any delay.\par
 Given that the discounted stock price ($\widetilde{S}_{k}$) is $(\mathcal{F}_{k})$-martingale under $\widetilde{\mathbb{P}}$, it follows that condition (\ref{riskmeasure}) holds, which shows that there is no arbitrage opportunity from time $H$ to $N$. Consequently, given (\ref{initialportfolio}), we conclude that there is no arbitrage in the model in the domain of $(\mathbb{R},\mathcal{A}_{\mathfrak{G}})$ strategies.
 \end{proof}
 \begin{remark}
 The domain of $(\mathbb{R},\mathcal{A}_{\mathfrak{G}})$ strategies in Theorem (\ref{NACondition}) does not include all $\mathfrak{F}$-adapted strategies, but only those which are $\mathfrak{G}$-adapted. In other words, we are excluding the case that an agent with full information comes and exploits the advantage over the investors with delayed information in the market. If we include all $\mathfrak{F}$-adapted strategies, it is likely to have arbitrage opportunities.
 \end{remark}
 \subsection{Super-Replication Price}
 Given that there is no arbitrage in the market, it now makes sense to discuss about pricing.  
 \begin{definition}[Super-replication price and the value process of super-replicating portfolio] 
 \label{superreplication}
 For any contingent claim with payoff function $\, \varphi : \Omega \to \mathbb R \, $ and expiration time $\,N \,$,
 its super-replication price $\, \bar{\pi}(\varphi)\, $ is defined as the minimal initial value of portfolio which exceeds the value $\,\varphi \,$ at time $\,N\,$, i.e., 
 \begin{eqnarray} \label{superreplicationequ}
 \bar{\pi}(\varphi) \, :=\,  \inf_{(x_{0},\Delta) \in \Gamma} \max_{\omega \in \Omega}{\left\{V_{0}(x_{0},\Delta)(\omega) \, =\, x_{0}+\Delta_{H}\widetilde{S}_{H}(\omega) \right\}} \, , 
  \end{eqnarray}
 where 
 \begin{eqnarray}
 \label{barpi}
 \, \Gamma \, :=\,  \left\{(x_{0},\Delta) \in \mathbb{R} \times \mathcal{A}_{\mathfrak{G}} : \, V_{N}(x_{0},\Delta) \geq \varphi \hspace{2.5 mm}  \mathbb{P}-a.s. \right\}\,.
 \end{eqnarray}
 If there exists a pair $\, (x_{0}^{\ast},\Delta^{\ast})\, $ that attains the infimum in (\ref{superreplicationequ}), i.e.,  $\bar{\pi}(\varphi) \, =\, \max_{\omega \in \Omega} V_{0}(x_{0}^{\ast},\Delta^{\ast})(\omega)\, $, then the time-$\, k\, $ super replicating portfolio value $\, \mathcal{V}_{k} (\omega)\, $ 
 is defined as  
\begin{eqnarray} \label{superrepvalueprocess}
 \mathcal{V}_{k} (\omega) \, :=\,  V_{k}(x_{0}^{\ast},\Delta^{\ast}) (\omega) \,, \quad \, k \, =\,  0, \ldots , N\,,  
\end{eqnarray}
 and consequently, $\, 
 \bar{\pi}(\varphi) \, =\, \max_{\omega \in \Omega} \mathcal{V}_{0}(\omega)\, $.  
 \end{definition}
 
 \begin{remark}
  The super-replication price is the most conservative pricing approach for the seller of the option, considering the worst-case scenario. In other words, it is straightforward to show that any price greater than the super-replication price causes arbitrage in the market.
  \end{remark}
  
  \begin{remark}
   \label{callputparity}
    It is remarkable to note that \textit{call-put parity} does not hold anymore. The reason is that the super-replication price $\, \bar{\pi} \, $ is a coherent risk measure on the space $\, \mathbb{L}^{\infty}(\Omega,\mathfrak{F},\mathbb{P})\, $ of payoff functions, and therefore it is subadditive.
    \end{remark}
 
%
 
  All of the results in this paper are for European-style contingent claims with convex payoff functions. In section \ref{sec1}, we consider first the case $\, H \, =\, N-1\, $ and determine the super-replication price and the corresponding strategy. This would make the building block for the general case discussed in section \ref{sec2}. The case for non-convex payoff functions is computationally more demanding as we do not have access to all the machinery developed for convex functions.

 \subsection{An $N$-period binomial model with $H=N-1$ periods of delay}
 \label{sec1}
   We determine the super-replication price and the corresponding strategy for the European contingent claims when $\, H \, =\, N-1\, $. Having $\, H \, =\, N-1\, $ periods of delayed information means that at time $\, 0\, $  the risky asset price $\, S_0\, $ is observed, but the order $\, \Delta_{H}\, $, sent by the investor at time $\,0\,$, would be executed at time $\, H\, $.  For example, when $\, N \, =\, 2\, $ and $\,H \, =\, 1\,$, the order $\,\Delta_{1}\,$ sent at time $\,0\,$ is executed at time $\,1\,$ with two possible prices $\, S_1 \, =\, S_0 d\, $ or $\, S_{1} \, =\, S_0 u\, $ (see Figure \ref{fig1}). 
 
 Let us observe that in the case of $\,H \, =\, N-1\,$, the terminal value $\,V_{N}(x_{0}, \Delta)\,$  in (\ref{portfolio}) is simplified to 
  \begin{equation} \label{VN}
  V_{N}(x_{0}, \Delta)(\omega) \, =\, e^{rN}x_{0}+ S_{N} (\omega) \cdot \Delta_{N-1}  \,. 
  \end{equation}
 There are $\, (N+1)\,$ possible values of $\, S_{N}(\omega)\,$, $\, \omega \in \Omega\,$ in (\ref{priceprocess}) and there are only two controls $\,(x_{0}, \Delta_{N-1}) \,$ in the terminal value.  
 Since there are $\, (N+1)\, $ constraints and only two controls, the minimization problem in (\ref{superreplicationequ}) has possibly infinitely many solutions. In other words, in an economic sense, the market is not complete. To learn more about pricing in incomplete markets, we refer to \cite{staum2007incomplete}.

    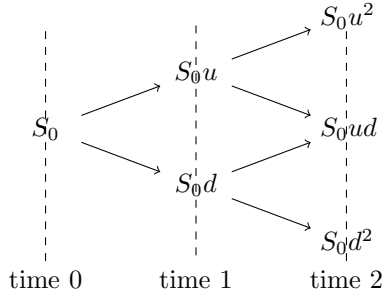
\begin{figure}
 \centering
 \tikzstyle{bag} = [text width=2em, text centered]
 \tikzstyle{end} = []
 \begin{tikzpicture}[sloped]
   \node (a) at ( 0,0) [bag] {$S_0$};
   \node (b) at ( 2,-0.75) [bag] {$S_0 d$};
   \node (c) at ( 2,0.75) [bag] {$S_0 u$};
   \node (d) at ( 4,-1.5) [bag] {$S_0 d^2$};
   \node (e) at ( 4,0) [bag] {$S_0 u d$};
   \node (f) at ( 4,1.5) [bag] {$S_0 u^2$};
   \node (aa1) at (0,-2){time 0};
   \node (aa2) at (0,1.5){};
   \node (bb) at (2,-2.0){time 1};
   \node (cc) at (2,1.5){};
   \node (dd) at (4,-2){time 2};
   \node (ff) at (4,1.5){};
   \draw [->] (a) to node [below] {} (b);
   \draw [->] (a) to node [above] {} (c);
   \draw [->] (c) to node [below] {} (f);
   \draw [->] (c) to node [above] {} (e);
   \draw [->] (b) to node [below] {} (e);
   \draw [->] (b) to node [above] {} (d);
   \draw [dashed] (aa1) to node [above]{} (aa2);
   \draw [dashed] (cc) to node [above]{} (bb);
   \draw [dashed] (dd) to node [above]{} (ff);
 \end{tikzpicture}
 \caption{Asset price process $S_{k}$ in a $2$-period binomial model}
 \label{fig1}
 \end{figure}

 \begin{theorem}
 \label{thm1}
 For a European-style contingent claim with payoff $\, \varphi \, :=\, \Phi(S_{N}) \, $ for some convex function $\, \Phi(\cdot)\,$ in the $\, N$-period binomial model with $\, H \, =\, N-1\, $ periods of delay, the super-replication price is
 \begin{eqnarray}
 \bar{\pi}(\varphi) \, =\, \max{\left( x_{0}^{\ast}+e^{-rH}\,  \Delta_{H}^{\ast} \cdot \, S_{0}\, u^{H}\, , \, x_{0}^{\ast}+e^{-rH} \, \Delta_{H}^{\ast} \cdot \, S_{0}\, d^{H} \right)}\, , 
 \end{eqnarray}
 where the corresponding strategy $\, (x_{0}^{\ast}, \Delta^{\ast}) \,$ is given by $\, \Delta_{j}^{\ast} \equiv 0\,$, $\,j \, =\, 0, 1, \ldots , H-1\,$, 
 \begin{equation} \label{SH}
 \Delta_H ^{\ast} \, =\, \Delta_{N-1}^{\ast} \, =\, \frac{\, \Phi(S_0 u^{N}) - \Phi (S_0 d^{N})\, }{\, S_0  \cdot (u^{N}- d^{N}) \, }
 \quad
 \text{ and } \quad
 x_{0}^{\ast}=e^{-rN}\cdot \frac{\, u^{N} \Phi (S_0 d^{N})-  d^{N} \Phi (S_0 u^{N})\, }{ u^{N} - d^{N} }\, .
 \end{equation}
 \end{theorem}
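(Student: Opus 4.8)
The plan is to exploit the collapse of the problem, when $H=N-1$, to a two-variable linear program. First I would note that $\Delta_H=\Delta_{N-1}$ is $\mathcal G_H=\mathcal F_0$-measurable and hence a deterministic scalar, so together with $x_0$ the strategy has only two free parameters. By (\ref{VN}) the terminal wealth is the affine function $s\mapsto e^{rN}x_0+\Delta_{N-1}s$ of the terminal price, evaluated at the $N+1$ nodes $s_j:=S_0u^jd^{N-j}$, $j=0,\dots,N$, whose extremes are $s_0=S_0d^N$ and $s_N=S_0u^N$. The objective $\max_{\omega}V_0(x_0,\Delta)(\omega)=\max_\omega\{x_0+\Delta_{N-1}e^{-rH}S_H(\omega)\}$ collapses to $\max\{x_0+e^{-rH}\Delta_{N-1}S_0u^H,\;x_0+e^{-rH}\Delta_{N-1}S_0d^H\}$, since the values of $S_H$ have extremes $S_0u^H$ and $S_0d^H$. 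Thus (\ref{superreplicationequ})--(\ref{barpi}) become: minimise this piecewise-linear objective in $(x_0,\Delta_{N-1})$ subject to the $N+1$ constraints $e^{rN}x_0+\Delta_{N-1}s_j\ge\Phi(s_j)$.

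The key structural reduction is that convexity of $\Phi$ lets me keep only the two extreme constraints. I would argue that the affine map $\ell(s):=e^{rN}x_0+\Delta_{N-1}s$ dominates $\Phi$ at every interior node as soon as it dominates $\Phi$ at $s_0$ and $s_N$: the epigraph of the convex function $\Phi$ is convex, so if $(s_0,\ell(s_0))$ and $(s_N,\ell(s_N))$ both lie in it, then so does the whole segment joining them, which is exactly the graph of the affine $\ell$ over $[s_0,s_N]$; hence $\ell(s)\ge\Phi(s)$ for all $s\in[s_0,s_N]$, and in particular at each $s_j$. Therefore $\Gamma$ is cut out, up to redundant inequalities, by the two constraints $e^{rN}x_0+\Delta_{N-1}S_0u^N\ge\Phi(S_0u^N)$ and $e^{rN}x_0+\Delta_{N-1}S_0d^N\ge\Phi(S_0d^N)$.

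It then remains to solve this two-constraint program. Its feasible set is a wedge in the $(x_0,\Delta_{N-1})$-plane whose vertex is the unique point at which both constraints bind, i.e. where $\ell$ is the secant of $\Phi$ through $(S_0d^N,\Phi(S_0d^N))$ and $(S_0u^N,\Phi(S_0u^N))$. Solving the two equalities for slope and intercept yields exactly $\Delta_{N-1}^\ast$ and $x_0^\ast$ as stated in (\ref{SH}), and substituting $(x_0^\ast,\Delta_H^\ast)$ into the objective reproduces the claimed expression for $\bar\pi(\varphi)$.

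The step I expect to be the main obstacle is proving that this vertex is the minimiser, rather than the objective decreasing without bound along one of the two edges of the wedge. Here I would invoke the no-arbitrage inequalities $d<e^{r}<u$ underlying Theorem \ref{NACondition}: computing the directional derivative of the (convex, piecewise-linear) objective along each bounding ray of the wedge, one finds each to be a positive multiple of a quantity such as $u-e^{r}$ or $e^{r}u^{H}-d^{N}$ that these inequalities render strictly positive. By convexity, nonnegative directional derivatives along both extreme rays force the vertex to be the global minimiser. A small case split on the sign of $\Delta_{N-1}^\ast$ is needed, since the objective switches between its two linear pieces at $\Delta_{N-1}=0$, but the same inequalities dispose of every case.
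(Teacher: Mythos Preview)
Your proposal is correct and follows essentially the same route as the paper: reduce the $N+1$ super-replication constraints to the two extreme ones via convexity of $\Phi$, then solve the resulting two-variable linear program whose optimum is the secant through $(S_0d^N,\Phi(S_0d^N))$ and $(S_0u^N,\Phi(S_0u^N))$. The only organisational difference is that the paper verifies optimality by checking the KKT conditions for each \emph{fixed} $\omega$ (the Lagrange multipliers come out nonnegative precisely by the no-arbitrage inequalities $d<e^{r}<u$) and then observes that the minimiser $(x_0^\ast,\Delta_H^\ast)$ is independent of $\omega$, which lets it swap $\inf$ and $\max$ without the case split on the sign of $\Delta_{N-1}^\ast$ that you anticipate.
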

 \begin{proof}
 First, we shall prove that for any $\, \omega \in \Omega\, $, $\, (x_{0}^{\ast}, \Delta^{\ast})\, $ in (\ref{SH}) satisfies 
 \begin{eqnarray}
 \label{claim}
 \inf_{(x_{0},\Delta) \in \Gamma} \left\{V_{0}(x_{0},\Delta)(\omega) \, =\, x_{0}+\Delta_{H}\widetilde{S}_{H}(\omega) \right\} \, =\,  V_{0}(x_{0}^{\ast},\Delta^{\ast})(\omega) \, =\,  x_{0}^{\ast}+\Delta_{H}^{\ast} \cdot \widetilde{S}_{H}(\omega).
 \end{eqnarray}
 Here the infimum is taken over the set $\, \Gamma\, $ in (\ref{barpi}), that is, $\, x_{0} \in \mathbb{R}\, $ and $\, \Delta \in \mathcal{A}_{\mathfrak{G}}\, $ must satisfy $\, V_{N}(x_{0},\Delta) \geq \varphi(S_{N}) \, $ almost surely. Note that $\, V_{N}(x_{0},\Delta) \, =\,  (e^{rN}x_{0}+ x \cdot \Delta_{N-1})\vert_{x = S_{N}}\, $ in (\ref{VN}) is realized as the value at $\,x = S_{N}\,$ of linear function $ \, y \, =\,  e^{rN} \, x_{0}+ x \cdot \Delta_{N-1}\, $ with the slope $\, \Delta_{H}\,$ and the $\,y\,$-intercept $\, e^{rN} x_{0} \,$ in the $\, (x, y)\, $ coordinates. Moreover, since the payoff function $\,  \Phi(\cdot)\, $ is convex, by Jensen's inequality, one can verify 
  \begin{eqnarray}
 \label{endpoint}
 \Gamma \, =\, \big\{ (x_{0}, \Delta) \in \mathbb R \times \mathcal A_{\mathfrak G} \, : \, e^{rN}x_{0}+S_{0}u^{N} \cdot \Delta_{N-1} \, \geq\,  \Phi(S_{0}u^{N}) ,  \quad 
 e^{rN}x_{0}+S_{0}d^{N} \cdot \Delta_{N-1} \, \geq \, \Phi(S_{0}d^{N}) \} \, . \nonumber \\
 \end{eqnarray}
 That is, in order to check whether the inequality $\, V_{N}(x_{0}, \Delta) \, \ge \, \Phi (S_{N}) \, $ holds with probability one, it suffices to check it just at the extreme cases, in which the asset price $S_{N}$ at time $N$ is the minimum $ S_{0} d^{N}$ or the maximum $\, S_{0} u^{N}\,$ in the binomial tree model. Then it is easy to check that the choice $ (x_{0}^{\ast}, \Delta_{H}^{\ast})$ in (\ref{SH}) belongs to the set $\, \Gamma\,$ as we have $\,  e^{rN} x_{0}^{\ast}+  \Delta_{H}^{\ast} {S}_{0}  u^{N} \, =\, \varphi(S_{0}u^{N}) \, $, $\, e^{rN} x_{0}^{\ast}+ \Delta_{H}^{\ast}  S_{0} d^{N} \, =\, \varphi(S_{0}d^{N}) \, $. 
 In other words, the minimization problem is reduced to a linear programming problem
 \begin{eqnarray}
 && \underset{(x_{0},\Delta_{H}) \in \mathbb R^{2}}{\text{minimize}}
 \,\, x_{0}+\Delta_{H} \cdot \widetilde{S}_{H}(\omega) \nonumber \\
 && \text{subject to}
 \,\,  \quad e^{rN}x_{0}+S_{0}u^{N} \cdot \Delta_{H} \, \geq \, \Phi(S_{0}u^{N})\, ,  \quad \text{ and }  \quad   e^{rN}x_{0}+S_{0}d^{N} \cdot \Delta_{H} \, \geq \, \Phi(S_{0}d^{N})\, .  \nonumber
 \end{eqnarray}
 Define the Lagrangian as
 \begin{eqnarray*}
 \mathcal{L} 
  \, :=\,  x_{0}+\Delta_{H}\widetilde{S}_{H}(\omega) + \lambda_{1} [\Phi(S_{0}u^{N})-\left(e^{rN}x_{0}+S_{0}u^{N}\Delta_{H}\right) ] 
 + \lambda_{2} [\Phi(S_{0}d^{N})-\left(e^{rN}x_{0}+S_{0}d^{N}\Delta_{H} \right) ]\, , 
 \end{eqnarray*}
 where $\, \lambda_{1}\, $ and $\, \lambda_{2}\, $ are the Lagrangian multipliers. Then, it is easy to check that the quantities 
 \begin{eqnarray*}
 x_{0}^{\ast}\, =\, e^{-rN}\cdot \frac{\, u^{N} \Phi(S_0 d^{N})-  d^{N} \Phi(S_0 u^{N})\, }{u^{N} - d^{N}}\, ,  \quad 
 \Delta_H ^{\ast} \, =\,  \frac{\Phi(S_0 u^{N})-\Phi(S_0 d^{N})}{S_0 u^{N}-S_0 d^{N}}\, , \\
 \lambda_{1}^{\ast} \, =\,  \frac{\, \widetilde{S}_{H}(\omega)-e^{-rN}S_{0}d^{N}\, }{S_{0}\cdot (u^{N}-d^{N})}\, , \quad  
 \lambda_{2}^{\ast} \, =\,  \frac{e^{-rN}S_{0}u^{N}-\widetilde{S}_{H}(\omega)}{S_{0}u^{N}-S_{0}d^{N}}\,   \nonumber
 \end{eqnarray*}
 satisfy the Karush-Kuhn-Tucker conditions for the minimization. Hence, (\ref{claim}) follows, and it is the key to prove that  
 \begin{equation}
  \inf_{(x_{o},\Delta) \in \Gamma} \max_{\omega \in \Omega}{ V_{0}(x_{0},\Delta)(\omega) } \, =\,  \max_{\omega \in \Omega}\inf_{(x_{o},\Delta) \in \Gamma} V_{0}(x_{0},\Delta)(\omega)  \,. 
 \end{equation}
 Thus, we get
 \begin{equation}
 \label{www}
  \bar{\pi}(\varphi)\, =\,\max_{\omega \in \Omega}\inf_{(x_{o},\Delta) \in \Gamma} V_{0}(x_{0},\Delta)(\omega)  \, =\, \max\limits_{\omega \in \Omega} V_{0}(x_{0}^{\ast},\Delta^{\ast})(\omega)\, .
 \end{equation}
 Then, the proof is completed by the following observation 
  \begin{eqnarray}
 \label{maximum}
 \max_{\omega \in \Omega} {\big \{V_{0}(x_{0}^{\ast},\Delta^{\ast})(\omega)=x_{0}^{\ast}+\Delta_{H}^{\ast}\widetilde{S}_{H}(\omega)\big \}=\max{\left( x_{0}^{\ast}+e^{-rH} \Delta_{H}^{\ast} \cdot S_{0}u^{H},x_{0}^{\ast}+e^{-rH} \Delta_{H}^{\ast} \cdot S_{0}d^{H} \right)}.} \nonumber \\
 \end{eqnarray}
 \end{proof}
   By using Theorem \ref{thm1}, the portfolio value $\mathcal{V}_{H} \in \mathfrak{F}_{H}$ in (\ref{superrepvalueprocess}) at time $H$ of the super-replicating strategy can be calculated as
 \begin{eqnarray}
 \label{rep}
 \mathcal{V}_{H} \, \, =\, 	\, e^{rH}x_{0}^{\ast}+\Delta_{H}^{\ast} \cdot S_{H} \nonumber 
                             \, &=&\,  \sum_{j=0}^{H}e^{-r (N-H)} \, \mathbb{E}^{\mathbb{Q}_{j}} \big[ \Phi(S_{N}) \big]\cdot \mathbbm{1}_{\{ S_{H} \, =\,  S_{0}\, u^{j}\, d^{H-j}\}}  \nonumber \\
                             \, \, &=&\, \, \sum_{j=0}^{H}e^{-r(N-H)} \left [ \, \mathfrak{p}_{j}\,  \Phi(S_{0}u^{N}) + \mathfrak{q}_{j} \, \Phi(S_{0}d^{N})   \, \right]  \cdot \mathbbm{1}_{\{S_{H} \, =\,  S_{0}\, u^{j}\, d^{H-j}\}}.
 \end{eqnarray}
 Here $\{\mathbb{Q}_{j}\}_{j=0}^{H}$ are probability measures on $(\Omega,\mathfrak{F})$ defined by 
 \begin{eqnarray}
 \label{measure}
 \mathbb{Q}_{j}(I_{N}\, =\, N)\, :=\, \mathfrak{p}_{j} \, =\, 1-\mathbb{Q}_{j}(I_{N}\, =\, 0) \, =\, 1-\mathfrak{q}_{j} \, , 
 \quad \mathfrak{p}_{j} \, :=\, \frac{\, u^{j}d^{H-j}e^{r}-d^{H+1}\, }{\, u^{H+1}-d^{H+1}\, }\, , \hspace{2 mm} j \, =\, 0,\ldots,H\, . \nonumber \\
 \end{eqnarray}
 \begin{remark} \label{pathdependent}
 We can conclude from the form in (\ref{rep}) that $\,\mathcal{V}_{H} \,$, the value of the super-replicating portfolio at time $H$, is a function of $\, S_0 \,$ and $\, S_H \,$. In other words
 \begin{equation}
 \mathcal{V}_{H} \equiv \mathcal{V}_{H}(S_0,S_H). \nonumber
 \end{equation}  
 Therefore, the value process for the super-replicating portfolio is \emph{path dependent}, due to the existence of $H$ periods of lag between the times of order submission and execution.
 \end{remark}
 Thus, the super-replication price $\, \bar{\pi}(\varphi)\, $ can be calculated as
 \begin{eqnarray} \label{superrepeasy}
 \bar{\pi}(\varphi)\,=\,\max\limits_{\omega \in \Omega} \mathcal{V}_{0} \left(\omega\right)  
                   \, &=&\, \max_{j \in \{0,\dots,H\}}e^{-rN} \, \mathbb{E}^{\mathbb{Q}_{j}}\big [\Phi(S_{N})\big]    \nonumber
                   \, =\, \max_{j \in \{0,H\}} e^{-rN}\, \mathbb{E}^{\mathbb{Q}_{j}}\left [\Phi(S_{N}) \right]  \\
                   \,&=&\,\max_{j \in \{0,H\}}e^{-rN} \left[ \mathfrak{p}_{j} \Phi(S_{0}u^{N}) + \mathfrak{q}_{j} \Phi(S_{0}d^{N})  \right]  \nonumber \\
                   \,&=&\, e^{-rN} \max \left( \mathfrak{p}_{u} \Phi(S_{0}u^{N}) + \mathfrak{q}_{u} \Phi(S_{0}d^{N}) \, , \, \mathfrak{p}_{d} \Phi(S_{0}u^{N}) + \mathfrak{q}_{d} \Phi(S_{0}d^{N})  \right) \,,
 \end{eqnarray}   
  where the third equality follows similarly as in (\ref{maximum}). \par
  
  \medskip 
  
 \noindent \textbf{Notation:} From now on, we use $\, (\mathfrak{p}_{u},\mathfrak{q}_{u})\, $ as $\, ( \mathfrak{p}_{H}, \mathfrak{q}_{H})\, $ and $\, (\mathfrak{p}_{d},\mathfrak{q}_{d})\, $ as $\, (\mathfrak{p}_{0},\mathfrak{q}_{0})\, $, since $\, (\mathfrak{p}_{H},\mathfrak{q}_{H})\, $ and $\, (\mathfrak{p}_{0},\mathfrak{q}_{0})\, $ correspond to the measures at the extreme points $\, S_{H} \, =\, S_{0}u^{H}\, $ and $\, S_{H} \, =\, S_{0}d^{H}\, $  respectively.

 \subsection{An $\, N$-Period binomial Model with $\, H\, $ Periods of Delay}
 \label{sec2}
 We extend our considerations from section \ref{sec1} and generalize the model to the $\, N$-period binomial model with $\, H (\le N-1)\, $ periods of delay.  We determine the super-replication price and the corresponding strategy for European style contingent claims with convex payoff functions. Here we shall solve the problem from both a dynamic programming (or backward induction) approach and a direct approach.
  
 \subsubsection{Dynamic Programming Approach}
 \label{dynamic} 
 First, let us define the tree $\, {\mathcal T}_{N}(0, 0) \,$ of length $\, N\,$ as the set of nodes $\, (i,j) \,$, such that there are $\,i \,$ ups and $\, j\,$ downs from the node $\, (0, 0) \,$ with $\, 0 \le i+j \le N \,$, i.e., 
  \[
  {\mathcal T}_{N} (0, 0) \, :=\, \{ (i,j) \in \mathbb N_{0}^{2}\, : \, 0 \le i + j \le N \} .
  \]
  Then define its $(H+1)$-period subtree $\, \mathcal T_{H+1} (a, b) \,$ starting from the node $\, (a, b) \,$ at time $\, a+b\,$ by 
  \[
  \mathcal T_{H+1} (a,b) \, :=\, \{ (i,j) \in \mathbb N_{0}^{2} \,: \, a+b\le i+j \le a+b+H+1\, ,\,  i \ge a \, , \, j \ge b\}  ,
  \]
  for every $(a,b) \in {\mathcal T}_{N} (0, 0)$ such that $a+b \leq N-(H+1)$. \par 
  
 We shall identify all $\, N-H\, $ subtrees $\mathcal T_{H+1} (a,b)$ starting from the  nodes $\, (a, b) \,$ at time $\, N-(H+1) \,$ (i.e, $\, a+b=N-(H+1)\,$). We use the results in section \ref{sec1} and consider the value process of the super-replicating portfolio at time $N-1$ as the new payoffs for the next round of ($H+1$)-period subtrees starting from the nodes at time $\, N-(H+2) \,$. Then, we keep super-replicating backwards in the same manner. \par 
 
 \begin{remark}
 Given that in the dynamic programming approach, we are using the results in section (\ref{sec1}) in each step, and Remark (\ref{pathdependent}), we can conclude that the value process at level $k \in \{H,\dots, N\}$ for the super-replicating strategy in the general model is also path dependent, that is
 \begin{equation} \label{supereplvalueprocess22}
 \mathcal{V}_{k}\,\equiv\,\mathcal{V}_{k}(S_{k-H},S_k), \, \quad  k \, =\,  H,\dots,N\,. 
 \end{equation}
 \end{remark}
 
 Therefore, let us define the payoff for the subtree $\, \mathcal T_{H+1} (a, b) \,$ starting from the node $\, (a, b) \,$ at time $\, a+b\,$ at its leaf node $\,(p,q) \,$ (i.e, $\, p+q = a+b+H+1 \,$)  by 
   \begin{equation}
   \label{payoff}
   \\ \numberthis
    \Phi_{\mathcal T_{H+1} \left(a, b\right)}\left(p,q\right) \, :=\,  
   \left \{
   \begin{array}{ll}
   \mathcal{V}_{p+q}\left(\, S_{a+b} d\, ,\, S_{a+b} d^{H+1}\,\right) \, \quad \text{ if } p\, =\,  a\,;\\
   \\
   \max\bigg\{\mathcal{V}_{p+q}\left(\, S_{a+b} d\, ,\, S_{a+b} u^{i}d^{H+1-i} \,\right),\mathcal{V}_{p+q}\left(\, S_{a+b} u\,,\, S_{a+b} u^{i}d^{H+1-i} \, \right)\bigg\}  \, \quad  \text{ if }  p\,=\, a+i\,, \\
   \quad        \, \hspace{10.7 cm}  i \,=\,1,\ldots,H\, ;  \nonumber \\ 
   \mathcal{V}_{p+q}\left(\, S_{a+b} u\, ,\, S_{a+b} u^{H+1} \, \right) \, \quad \text{ if } p\, =\,  a+H+1\,;  \\
   \end{array}
   \right . 
   \end{equation}
 for $\,p+q \, \leq\, N-1$, and 
  $\,  \Phi_{\mathcal T_{H+1} \left(a, b\right)}\left(p,q\right):= \Phi(S_{N})\, $, 
  $\, p+q \, =\, N\, $ where $S_N\,=\,S_0 u^p d^q$. \par 
  Intuitively, for the subtree $\mathcal T_{H+1} \left(a, b\right)$ starting at time $\, a+b\,$, there are only two $\, (H+1)\, $-period subtrees, $\mathcal T_{H+1} \left(a+1, b\right)$ and $\mathcal T_{H+1} \left(a, b+1\right)$, starting at time $\, a+b+1\,$ that can induce payoff at time $\, p+q\,$. So, we need to take the maximum of the two possible value process as the new payoff because we always consider worst case scenario in super replication. Note that at the edge points, there exists only one value process. \par 
 
 \begin{example} In the $\, 4\, $-period binomial tree model (as in Figure \ref{fign4}) with $\, H=1\, $, what new payoff we need to consider on the node $\, S_{3}\, =\,  S_{0} u^{2}d\, $ depends on whether we are considering this node as part of the subtree $\mathcal T_{2} \left(1, 0\right)$ or $\mathcal T_{2} \left(0, 1\right)$. As part of the subtree $\mathcal T_{2} \left(1, 0\right)$, the payoff $\left(\Phi_{\mathcal T_{2} \left(1,0\right)}\left(2,1\right)\right) \,$ would be the maximum of the corresponding value processes of the subtrees $\mathcal T_{2} \left(1, 1\right)$ and $\mathcal T_{2} \left(2, 0\right)$, while as part of the subtree $\mathcal T_{2} \left(0, 1\right)$, the payoff $\left(\Phi_{\mathcal T_{2} \left(0,1\right)}\left(2,1\right)\right) \,$ would be the corresponding value processes of the subtrees $\mathcal T_{2} \left(1, 1\right)$.
  \end{example}

 One important ingredient in the dynamic programming approach is that when we start from a convex payoff function, the payoff in (\ref{payoff}) for all the intermediary $\, (H+1)\,$-period subtrees needs to be convex with respect to the corresponding risky asset prices, in order to be able to use Theorem (\ref{thm1}) in each step and keep super-replicating backwards. Theorem (\ref{thm2}) formalizes this relation.\par

 \begin{theorem}
 \label{thm2}
 For a European-style contingent claim with payoff $\, \varphi \, :=\, \Phi(S_{N}) \, $ for some convex function $\, \Phi(\cdot)\,$ in the $\, N$-period binomial model with $\, H \, \leq\, N-1\, $ periods of delay, the payoff function $\Phi_{\mathcal T_{H+1} \left(a, b\right)}\left(.,.\right)$, $a+b\, =\, 0,\dots,N-(H+1)$ in (\ref{payoff}) for all the intermediary $(H+1)$-period subtrees are convex with respect to the corresponding risky asset prices.
 \end{theorem}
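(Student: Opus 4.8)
The plan is to argue by backward induction on the root time $\,\tau := a+b\,$ of the subtree, running from $\,\tau = N-(H+1)\,$ down to $\,\tau = 0\,$. Throughout I regard $\,\Phi_{\mathcal T_{H+1}(a,b)}(\cdot,\cdot)\,$ as a function of the leaf price $\, x = S_{a+b}\,u^{i}d^{H+1-i}\,$, $\, i=0,\dots,H+1\,$; since $\,x\,$ is strictly increasing in $\,i\,$, ``convex with respect to the risky asset price'' means discrete convexity along these geometrically spaced nodes. The base case $\,\tau = N-(H+1)\,$ is immediate, since there the leaf payoff is $\,\Phi\,$ evaluated at the leaf prices, and the restriction of a convex function to an increasing set of points is discretely convex.

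For the inductive step, assume every subtree rooted at time $\,\tau+1\,$ has a convex leaf payoff. The value process at time $\,\tau+H+1\,$ is obtained, via the construction of subsection \ref{dynamic}, by applying Theorem \ref{thm1} to the subtrees rooted at time $\,\tau+1\,$; the induction hypothesis makes those leaf payoffs convex, so Theorem \ref{thm1} and (\ref{rep})--(\ref{measure}) apply and give, for each admissible submission price $\, s\in\{S_{a+b}u,\,S_{a+b}d\}\,$,
\begin{equation}
\mathcal V_{\tau+H+1}(s,x)\,=\,e^{-r}\Big[\,\mathfrak p(s,x)\,g_{s}\big(s\,u^{H+1}\big)+\big(1-\mathfrak p(s,x)\big)\,g_{s}\big(s\,d^{H+1}\big)\,\Big],\qquad \mathfrak p(s,x)\,=\,\frac{(x/s)e^{r}-d^{H+1}}{u^{H+1}-d^{H+1}}, \nonumber
\end{equation}
where $\,g_{s}\,$ is the (convex) leaf payoff of the subtree rooted at the node of price $\,s\,$. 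The decisive feature is that $\,\mathfrak p(s,x)\,$ is affine in the execution price $\,x\,$, so $\, x\mapsto \mathcal V_{\tau+H+1}(s,x)\,$ is affine for each fixed $\,s\,$ and depends only on the two extreme leaf payoffs.

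Write $\,L_{d}(x):=\mathcal V_{\tau+H+1}(S_{a+b}d,x)\,$ and $\,L_{u}(x):=\mathcal V_{\tau+H+1}(S_{a+b}u,x)\,$, two affine functions of $\,x\,$. By the definition (\ref{payoff}), the new payoff of the subtree rooted at $\,\tau\,$ equals $\,\max\{L_{d}(x),L_{u}(x)\}\,$ at every interior leaf $\, i=1,\dots,H\,$, equals $\,L_{d}\,$ at the extreme-down leaf $\, x_{0}=S_{a+b}d^{H+1}\,$, and equals $\,L_{u}\,$ at the extreme-up leaf $\, x_{H+1}=S_{a+b}u^{H+1}\,$. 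Since the pointwise maximum of two affine functions is convex, and its restriction to the increasing leaf prices is discretely convex, it suffices to check that the new payoff also coincides with $\,\max\{L_{d},L_{u}\}\,$ at the two extreme leaves, i.e.
\begin{equation}
L_{d}(x_{0})\,\ge\,L_{u}(x_{0})\qquad\text{and}\qquad L_{u}(x_{H+1})\,\ge\,L_{d}(x_{H+1}). \nonumber
\end{equation}
As $\,L_{u}-L_{d}\,$ is affine, these two endpoint sign conditions are all that is required.

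These endpoint inequalities are the main obstacle. The price $\,x_{0}\,$ is the minimal reachable execution price for the down-node but lies below the reachable range of the up-node, so $\,L_{u}(x_{0})\,$ is a leftward extrapolation of the up-node's affine value, and symmetrically at $\,x_{H+1}\,$. I expect the two inequalities to follow by substituting the explicit weights $\,\mathfrak p(s,x_{0})\,$ and $\,\mathfrak p(s,x_{H+1})\,$ into $\,L_{u},L_{d}\,$ and using the convexity of the previous-round payoffs $\,g_{S_{a+b}u}\,$ and $\,g_{S_{a+b}d}\,$ together with $\,u>1+r>d\,$ to compare the resulting convex combinations of payoff values; carrying out this comparison cleanly --- rather than the induction scaffolding, which is routine --- is where the real effort lies. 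Once both inequalities hold, the new payoff equals $\,\max\{L_{d},L_{u}\}\,$ on all of $\, x_{0}<\cdots<x_{H+1}\,$, hence is convex, and the induction closes.
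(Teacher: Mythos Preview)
Your approach is essentially the paper's. Both argue by backward induction on the root time, observe via Theorem \ref{thm1}/(\ref{rep}) that the time-$(\tau+H+1)$ value process coming from each of the two subtrees rooted at time $\tau+1$ is affine in the execution price (the paper writes these as $h_{1},h_{2}$; you call them $L_{u},L_{d}$), and note that the new intermediary payoff at interior leaves is the pointwise maximum of these two affine functions, hence discretely convex there.

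The one place you diverge from the paper is in emphasis, not substance. You isolate the two endpoint inequalities $L_{d}(x_{0})\ge L_{u}(x_{0})$ and $L_{u}(x_{H+1})\ge L_{d}(x_{H+1})$ as the crux still to be verified. The paper handles exactly the same point by simply asserting that $h(t):=\max(h_{1}(t),h_{2}(t))$ equals the new payoff $\Phi_{\mathcal T_{H+1}(a',b')}$ at \emph{all} leaves --- an assertion that is equivalent to your two endpoint inequalities --- with the justification ``given that $\Phi_{\mathcal T_{H+1}(a,b)}(\cdot,\cdot)$ is convex, and (\ref{payoff})''; it then carries out a four-case second-difference check that $\max(h_{1},h_{2})$ is discretely convex. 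So the paper does not supply a more detailed argument for the endpoint comparison than you do; once that identification is granted, its case analysis is the elementary verification that a max of two affine functions is convex. Your write-up is, if anything, more explicit about where the residual work sits.
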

 \begin{proof}
 Note that for $a+b\,=\, N-(H+1)$, the payoff functions $\,  \Phi_{\mathcal T_{H+1} \left(a, b\right)}\left(.,.\right)$ for all $(N-H)$ intermediary $(H+1)$-period subtrees are convex, since the final payoff function $\Phi(S_{N})$ is convex.\par
 
 Now we show that all the payoff functions $\,  \Phi_{\mathcal T_{H+1} \left(a', b'\right)}\left(.,.\right)$, $a'+b'=a+b-1$ will be convex, if all the payoff functions $\,  \Phi_{\mathcal T_{H+1} \left(a, b\right)}\left(.,.\right)$, $a+b \in \{0,\dots,N-(H+1)\}$ are convex. By induction this completes the proof.\par
 
 Given that the payoff function $\,  \Phi_{\mathcal T_{H+1} \left(a, b\right)}\left(.,.\right)$ is convex, by Theorem (\ref{thm1}), there exists $x^{\ast}_{1}$ and $\Delta^{\ast}_{1}$ such that we define
  \begin{eqnarray}
  \left .
  \begin{array}{l}
  h_1(t):=\\
     \end{array}
     \right . 
     \left \{
     \begin{array}{ll}
     \mathcal{V}_{a+b+H}(\,S_{a'+b'}u\,, \,S_{a+b+H}\,)=e^{rH}x^{\ast}_{1}+\Delta^{\ast}_{1}t, \quad t \in \{S_{a'+b'}ud^{H},\dots,S_{a'+b'}u^{H+1}\}; \\
     \\
     e^{rH}x^{\ast}_{1}+\Delta^{\ast}_{1}t, \quad t=S_{a'+b'}d^{H+1}. \nonumber
     \end{array}
        \right .
  \end{eqnarray}
 Similarly, there exists $x^{\ast}_{2}$ and $\Delta^{\ast}_{2}$ such that we define
 \begin{eqnarray}
   \left .
   \begin{array}{l}
   h_2(t):=\\
      \end{array}
      \right . 
      \left \{
      \begin{array}{ll}
      \mathcal{V}_{a+b+H}(\,S_{a'+b'}d\,,\,S_{a+b+H}\,)=e^{rH}x^{\ast}_{2}+\Delta^{\ast}_{2}t, \quad t \in \{S_{a'+b'}d^{H+1},\dots,S_{a'+b'}u^{H}d\}; \\
      \\
      e^{rH}x^{\ast}_{2}+\Delta^{\ast}_{2}t, \quad t=S_{a'+b'}u^{H+1}, \nonumber
      \end{array}
         \right .
   \end{eqnarray}
	we can define
   \begin{eqnarray} \label{hdef}
   h(t)\,:=\,  \max\left(h_{1}(t),h_{2}(t)\right), \quad   \,  t \in \{S_{a'+b'}d^{H+1},\dots, S_{a'+b'}u^{H+1}\};  
   \end{eqnarray}
   Note that $h(t)\,=\,\Phi_{\mathcal T_{H+1} \left(a', b'\right)}\left(p,q\right)$ where $t\,:= \, S_0 u^p d^q$, given that $\,  \Phi_{\mathcal T_{H+1} \left(a, b\right)}\left(.,.\right)$ is convex, and (\ref{payoff}).\par 
 The discrete function $h(.)$ is convex if for any $v$ and $w$ such that $S_0 u^v d^w \in \{S_{a'+b'}ud^{H},S_{a'+b'}u^{H}d\}$, we have 
  \begin{eqnarray}
  \label{convex}
  h\left(t_p\right) + h\left(t_n\right) \geq 2 h\left(t_m\right), 
  \end{eqnarray}
  where $t_p:\,=\, S_0 u^{v-1} d^{w+1}$, $t_n:\,=\, S_0 u^{v+1} d^{w-1}$ and $t_m:\,=\, S_0 u^{v} d^w$.\par 
  Depending on the choice of $v$ and $w$, there are 4 cases:\\
   Case 1: $h(t_{p})=h_{1}(t_{p})$ and $h(t_{n})=h_{1}(t_{n})$. Then, given the form in (\ref{hdef}), we have $h(t_m)=h_1(t_m)$. Then, it is straightforward to show that (\ref{convex}) follows by linearity of the function $h_1(t_m)$.\\
   Case 2: $h(t_{p})=h_{2}(t_{p})$ and $h(t_{n})=h_{2}(t_{n})$. This case follows similar to that of case 1.\\
   Case 3: $h(t_{p})=h_{1}(t_{p})$ and $h(t_{n})=h_{2}(t_{n})$. Then, $h(t_m)$ would equal to either $h_{1}(t_m)$ or $h_{2}(t_m)$. Without loss of generality assume that $h(t_m)\, =\, h_{1}(t_m)$. Then given that $h(t_{n})=h_{2}(t_{n})$, we conclude by the form in (\ref{hdef}) that $h_{2}(t_{n}) \geq h_{1}(t_{n})$. So, we derive 
   \begin{eqnarray}
 	h\left(t_p\right) + h\left(t_n\right)\,=\, h_1\left(t_p\right) + h_2\left(t_n\right) 
 										\, \geq \, h_1\left(t_p\right) + h_1\left(t_n\right)
 										\, \geq \, 2 h_1\left(t_m\right) 
 										\, =\, 2 h\left(t_m\right) ,  \nonumber
   \end{eqnarray}
   where the last inequality follows by the linearity of the $h_{1}(.)$ function.\\
   Case 4: $h(t_{p})=h_{2}(t_{p})$ and $h(t_{n})=h_{1}(t_{n})$. This case follows similar to that of case 3.
 \end{proof}

 Therefore, given Theorem (\ref{thm2}), we can apply the dynamic programming approach, and derive the portfolio value $\mathcal{V}_{k}(S_{k-H},S_{k})$ in (\ref{supereplvalueprocess22}) at level $k\, = \, a+b +H$, $k \in \{H,\dots,N-1\}$ of the super-replicating strategy, using representation (\ref{rep}), as
 \begin{eqnarray}
  \label{dp}
  \mathcal{V}_{k}(S_0 u^a d^b,S_{k}) \,=\, \sum\limits_{j=0}^{H}e^{-r}\bigg[ \mathfrak{p}_{j}\Phi_{\mathcal T_{H+1} \left(a, b\right)}\left(a+H+1,b\right)+\mathfrak{q}_{j}\Phi_{\mathcal T_{H+1} \left(a, b\right)}\left(a,b+H+1\right)\bigg] \mathbbm{1}_{\{S_{k} = S_{k-H}u^{j}d^{H-j}\}}, \nonumber \\
  \\
  \,\,k=H,\dots,N-1, \nonumber
  \end{eqnarray}
  where $\mathfrak{p}_j$ and $\mathfrak{q}_j$, $j=0,\dots,H$ are defined as in (\ref{measure}).\par 
 Plugging in (\ref{payoff}) for $k=H,\dots,N-2$, we obtain the key recursive formula
 \begin{eqnarray}
 \label{recur}
 \mathcal{V}_{k}(S_{k-H},S_{k}) \,=\,\sum\limits_{j=0}^{H}e^{-r}\bigg[\mathfrak{p}_{j}\mathcal{V}_{k+1}(\,S_{k-H}u\,,\,S_{k-H}u^{H+1})+ \mathfrak{q}_{j}\mathcal{V}_{k+1}(\,S_{k-H}d\,,\,S_{k-H}d^{H+1}\,) \bigg]\mathbbm{1}_{\{S_{k} = S_{k-H}u^{j}d^{H-j}\}},\nonumber \\
 \\
 \,\,k=H,\dots,N-2. \nonumber
 \end{eqnarray}
 
 \begin{remark}
 \label{rmrecur}
 We can conclude that, when we are super-replicating backwards, the value process $\mathcal{V}_{k}(S_{k-H},S_{k})$ in (\ref{supereplvalueprocess22}) is only required at the two extreme points $S_{k}=S_{k-H}u^{H}$ and $S_{k}=S_{k-H}d^{H}$, because of the form on the right hand side of the recursive formula (\ref{recur}). In other words, we just use $(\mathfrak{p}_u,\mathfrak{q}_u)=(\mathfrak{p}_H,\mathfrak{q}_H)$ and $(\mathfrak{p}_d,\mathfrak{q}_d)=(\mathfrak{p}_0,\mathfrak{q}_0)$.
 \end{remark}
 \begin{figure}
 
 \centering
 \tikzstyle{bag} = [text width=2em, text centered]
 \tikzstyle{end} = []
 \begin{tikzpicture}[scale=0.8]
   \node (a) at ( 0,0) [bag] {$S_0$};
   \node (b) at ( 2,-0.75) [bag] {$S_0 d$};
   \node (c) at ( 2,0.75) [bag] {$S_0 u$};
   \node (d) at ( 4,-1.5) [bag] {$S_0 d^2$};
   \node (e) at ( 4,0) [bag] {$S_0 u d$};
   \node (f) at ( 4,1.5) [bag] {$S_0 u^2$};
    \node (g) at ( 6,-2.25) [bag] {$S_0 d^3$};
   \node (h) at ( 6,-0.75) [bag] {$S_0 u d^2$};
    \node (i) at ( 6,0.75) [bag] {$S_0 u^2 d$};
   \node (j) at ( 6,2.25) [bag] {$S_0 u^3$};
  \node (k) at ( 8,-3) [bag] {$S_0 d^4$};
    \node (l) at ( 8,-1.5) [bag] {$S_0 u d^3$};
   \node (m) at ( 8,0) [bag] {$S_0 u^2 d^2$};
    \node (n) at ( 8,1.5) [bag] {$S_0 u d^3$};
   \node (o) at ( 8,3) [bag] {$S_0 u^4$};
  \node (aaa) at (0,3){};
   \node (aa) at (0,-3.5){time 0};  
   \node (bb) at (2,-3.5){time 1};
   \node (cc) at (2,3){};
   \node (dd) at (4,-3.5){time 2};
   \node (ff) at (4,3){};
   \node (gg) at (6,-3.5){time 3};
   \node (jj) at (6,3){};
     \node (kk) at (8,-3.5){time 4};
   \node (oo) at (8,3){};
   \draw [->] (a) to node [below] {} (b);
   \draw [->] (a) to node [above] {} (c);
   \draw [->] (c) to node [below] {} (f);
   \draw [->] (c) to node [above] {} (e);
   \draw [->] (b) to node [below] {} (e);
   \draw [->] (b) to node [above] {} (d);
     \draw [->] (d) to node [below] {} (g);
   \draw [->] (d) to node [above] {} (h);
   \draw [->] (e) to node [below] {} (h);
     \draw [->] (e) to node [above] {} (i);
   \draw [->] (f) to node [below] {} (i);
   \draw [->] (f) to node [above] {} (j);
   \draw [->] (g) to node [below] {} (k);
 \draw [->] (g) to node [above] {} (l);  
  \draw [->] (h) to node [below] {} (l);
 \draw [->] (h) to node [above] {} (m);
  \draw [->] (i) to node [below] {} (m);
 \draw [->] (i) to node [above] {} (n);
  \draw [->] (j) to node [below] {} (n);
 \draw [->] (j) to node [above] {} (o);
 
 \draw [dashed] (aa) to node [above]{} (aaa);
   \draw [dashed] (cc) to node [above]{} (bb);
   \draw [dashed] (dd) to node [above]{} (ff);
 \draw [dashed] (gg) to node [above]{} (jj);
 \draw [dashed] (kk) to node [above]{} (oo);
 \end{tikzpicture}
   \caption{Asset price process $S_{k}$ in a 4-period binomial model}
   \label{fign4}
   \end{figure}
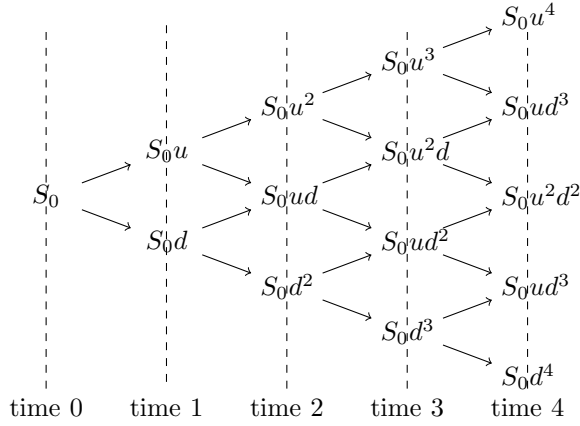
 Therefore, similar to (\ref{superrepeasy}), the super-replication price $\bar{\pi}(\varphi)$ can be finally calculated as
 \begin{eqnarray}
 \label{superrep}
 \bar{\pi}(\varphi)\,=\,e^{-rH} \max\left(\mathcal{V}_{H}(S_{0},S_{0}u^{H}),\mathcal{V}_{H}(S_{0},S_{0}d^{H})\right).
 \end{eqnarray}

 \subsubsection{Direct Approach}
 In this section, we solve the recursive equation (\ref{recur}) and obtain the value process $\mathcal{V}_{k}(S_{k-H},S_{k})$ for the super-replicating strategy explicitly. As Remark (\ref{rmrecur}) suggests, when we super-replicate backwards, we just need the value process at the extreme points, that is $\mathcal{V}_{k}(S_{k-H},S_{k-H}u^{H})$ and $\mathcal{V}_{k}(S_{k-H},S_{k-H}d^{H})$, $k=H,\dots,N-1$. \par
  Define probability spaces $(\Omega_{k},\mathcal{F}_{k},\mathbb{Q}_{k})$ for  $k=H,\dots,N-1$ with $\Omega_{k}=\{0,1\}^{\widetilde{N}+H}$, the Borel $\sigma$-algebra $\mathcal{F}_{k}$ on $\Omega_{k}$, and $\widetilde N = N-k$. For every $\omega_{k}=(\omega_{k,1},\dots,\omega_{k,\widetilde{N}+H})\in \Omega_{k}$, we define a coordinate map by $Z_{k,m}(\omega_{k})=\omega_{k,m}$ for each  $m \in \{1,\dots,\widetilde{N}+H\}$. \par
 Let $\mathbb{Q}_{k}$ be the probability measure under which $Z_{k,m}, m = 1, \ldots , \widetilde{N}+H$ with initial position $Z_{k,0}$ is a Markov chain, and for $ l=1,\dots,\widetilde{N}-1$, it has transition matrix
 \begin{eqnarray}
 \label{transition} 
 Q=
 \left( \begin{array}{ccc}
 \mathfrak{q}_{d} \, \quad \mathfrak{p}_{d}   \\
 \mathfrak{q}_{u} \, \quad \mathfrak{p}_{u}  \end{array} \right) \quad  \text{on} \, \, \{0,1\}.
 \end{eqnarray}
 Besides, for $l=\widetilde{N},\dots,\widetilde{N}+H$,
 \begin{eqnarray}
 \label{lastbigmove}
 \mathbb{Q}_{k}\left(Z_{k,\widetilde{N}+H}=\dots=Z_{k,\widetilde{N}}=1 \lvert Z_{k,\widetilde{N}-1}=1\right)\,=\,\mathfrak{p}_{u},  \,
 \mathbb{Q}_{k}\left(Z_{k,\widetilde{N}+H}=\dots=Z_{k,\widetilde{N}}=-1 \lvert  Z_{k,\widetilde{N}-1}=1\right)\,=\,\mathfrak{q}_{u}, \nonumber \\
 \mathbb{Q}_{k}\left(Z_{k,\widetilde{N}+H}=\dots=Z_{k,\widetilde{N}}=1 \lvert  Z_{k,\widetilde{N}-1}=0\right)\,=\,\mathfrak{p}_{d}, \,
 \mathbb{Q}_{k}\left(Z_{k,\widetilde{N}+H}=\dots=Z_{k,\widetilde{N}}=-1 \lvert Z_{k,\widetilde{N}-1}=0\right)\,=\,\mathfrak{q}_{d}.  \nonumber \\ 
 \end{eqnarray}
 The risky asset price $S_{k-H+m}$ satisfies  
 \begin{eqnarray}
 S_{k-H+m}:=S_{k-H}u^{I_{k,m}}d^{m-I_{k,m}} , \quad I_{k,m}=\sum_{l=1}^{m}Z_{k,l} , \quad m = 1, \ldots , \widetilde{N}+H \, .
 \end{eqnarray}
 \begin{remark}
 \label{moves}
 Under measures $\mathbb{Q}_{k}$, $k=H,\dots,N-1$, $\mathfrak{p}_{u}$ is the probability of an upward move preceded with an upward move, $\mathfrak{q}_{u}$ is the probability of a downward move preceded with an upward move, $\mathfrak{p}_{d}$ is the probability of an upward move preceded with a downward move, and $\mathfrak{q}_{d}$ is the probability of a downward move preceded with a downward move. Besides, equations (\ref{lastbigmove}) are to ensure that the last $H+1$ moves are all either upward or downward.
 \end{remark}
 \begin{remark}
 Under measures $\mathbb{Q}_{k}$, $k=H,\dots,N-1$, probability of a downward move
 preceded by a downward  move ($\mathfrak{q}_{d}$) is higher than the probability of a
 downward move preceded by an upward move ($\mathfrak{q}_{d}$). Similar is also true for upward moves. So, the variance of the risky asset price is
 higher under these measures than the initial measure $\mathbb{P}$.
 \end{remark}
 \begin{remark}
 \label{Hzero}
 If we put $H=0$, the transition matrix (\ref{transition}) would have duplicate rows (i.e. $\mathfrak{p}_{u}=\mathfrak{p}_{d}$ and $\mathfrak{q}_{u}=\mathfrak{q}_{d}$). Therefore, in this case, the model boils down to the binomial tree model of \cite{cox1979option}, and all the equations get significantly simplified accordingly.
 \end{remark}
 
 Theorem (\ref{valrecur}) expresses $\mathcal{V}_{k}(S_{k-H},S_{k-H}u^{H})$ and $\mathcal{V}_{k}(S_{k-H},S_{k-H}d^{H})$, $k=H,\dots,N-1$ as expectations under the measure $\mathbb{Q}_{k}$.
  \begin{theorem}
 \label{valrecur} For a European-style contingent claim with payoff $\, \varphi \, :=\, \Phi(S_{N}) \, $ for some convex function $\Phi(S_N) \in \mathbb{L}^{\infty}(\Omega_{k},\mathfrak{F}_{k},\mathbb{Q}_{k})$, $k=H,\dots,N-1$, the value process  $\mathcal{V}_k(S_{k-H},S_{k-H}u^{H})$ and $\mathcal{V}_k(S_{k-H},S_{k-H}d^{H})$, $k=H,\dots,N-1$ for the super-replicating strategy, in an $N$-period binomial model with $H$ periods of delay, can be calculated as
 \begin{eqnarray}
 \label{valup}
 \mathcal{V}_{k}(S_{k-H},S_{k-H}u^{H})\,=\,e^{-r\widetilde{N}}\mathbb{E}^{\mathbb{Q}_{k}}\left(\Phi\left(S_{N}\right) \lvert Z_{k,0}=1 \right), 
 \end{eqnarray}
 \begin{eqnarray}
 \label{valdown}
 \mathcal{V}_{k}(S_{k-H},S_{k-H}d^{H})\,=\,e^{-r\widetilde{N}}\mathbb{E}^{\mathbb{Q}_{k}}\left(\Phi\left(S_{N}\right) \lvert Z_{k,0}=0 \right).
 \end{eqnarray}
 \end{theorem}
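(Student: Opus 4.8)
The plan is to prove the two identities (\ref{valup})--(\ref{valdown}) simultaneously by backward induction on $k$, from $k=N-1$ down to $k=H$. The driver of the induction is Remark (\ref{rmrecur}): the recursion (\ref{recur}) is \emph{closed} on the two extreme-point values. Writing $U_k(S_{k-H}):=\mathcal{V}_k(S_{k-H},S_{k-H}u^H)$ and $D_k(S_{k-H}):=\mathcal{V}_k(S_{k-H},S_{k-H}d^H)$, I would evaluate (\ref{recur}) at $j=H$ (so $(\mathfrak{p}_H,\mathfrak{q}_H)=(\mathfrak{p}_u,\mathfrak{q}_u)$) and at $j=0$ (so $(\mathfrak{p}_0,\mathfrak{q}_0)=(\mathfrak{p}_d,\mathfrak{q}_d)$), and use $S_{k-H}u^{H+1}=(S_{k-H}u)u^H$ and $S_{k-H}d^{H+1}=(S_{k-H}d)d^H$ to obtain the closed two-state system
\begin{align*}
U_k(S_{k-H}) &= e^{-r}\bigl[\mathfrak{p}_u\,U_{k+1}(S_{k-H}u)+\mathfrak{q}_u\,D_{k+1}(S_{k-H}d)\bigr],\\
D_k(S_{k-H}) &= e^{-r}\bigl[\mathfrak{p}_d\,U_{k+1}(S_{k-H}u)+\mathfrak{q}_d\,D_{k+1}(S_{k-H}d)\bigr].
\end{align*}
Throughout, the convexity of $\Phi$ is what licenses using (\ref{recur}) at every level: Theorem (\ref{thm2}) keeps all intermediate subtree payoffs convex, so that Theorem (\ref{thm1}) and the representation (\ref{rep}) apply at each step.

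For the base case $k=N-1$ (so $\widetilde N=1$), I would read off $U_{N-1}$ and $D_{N-1}$ from (\ref{dp}): at the leaf level $p+q=N$ the subtree payoff is $\Phi(S_N)$, so $U_{N-1}(S_{N-1-H})=e^{-r}[\mathfrak{p}_u\Phi(S_{N-1-H}u^{H+1})+\mathfrak{q}_u\Phi(S_{N-1-H}d^{H+1})]$, and symmetrically for $D_{N-1}$ with $(\mathfrak{p}_d,\mathfrak{q}_d)$. On the other side, when $\widetilde N=1$ the chain $\mathbb{Q}_{N-1}$ has no transition-matrix steps and reduces to the terminal block (\ref{lastbigmove}): conditioned on $Z_{N-1,0}=1$, the last $H+1$ coordinates move all up with probability $\mathfrak{p}_u$, giving $S_N=S_{N-1-H}u^{H+1}$, or all down with probability $\mathfrak{q}_u$, giving $S_N=S_{N-1-H}d^{H+1}$. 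Hence $e^{-r}\mathbb{E}^{\mathbb{Q}_{N-1}}(\Phi(S_N)\mid Z_{N-1,0}=1)$ reproduces $U_{N-1}$ exactly, and likewise for $D_{N-1}$.

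For the inductive step, I would assume (\ref{valup})--(\ref{valdown}) at level $k+1$ (with horizon $\widetilde N-1$), substitute them into the two-state system, and factor out $e^{-r(\widetilde N-1)}$; the claim for $U_k$ then reduces to the single identity
\[
\mathbb{E}^{\mathbb{Q}_k}(\Phi(S_N)\mid Z_{k,0}=1)=\mathfrak{p}_u\,\mathbb{E}^{\mathbb{Q}_{k+1}}(\Phi(S_N)\mid Z_{k+1,0}=1)+\mathfrak{q}_u\,\mathbb{E}^{\mathbb{Q}_{k+1}}(\Phi(S_N)\mid Z_{k+1,0}=0),
\]
with the analogous identity for $D_k$. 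I would establish this by the Markov property and a time shift: conditioning on $Z_{k,0}=1$ and then on the first move $Z_{k,1}=z\in\{0,1\}$, the shifted chain $\tilde Z_m:=Z_{k,m+1}$ started at $\tilde Z_0=z$ has exactly the law of the $\mathbb{Q}_{k+1}$ chain started at $Z_{k+1,0}=z$, because the shift sends the $\widetilde N-1$ transition steps of $\mathbb{Q}_k$ to the $\widetilde N-2$ transition steps of $\mathbb{Q}_{k+1}$ and carries the terminal $(H+1)$-block (\ref{lastbigmove}) onto itself with the same conditioning coordinate. Summing against $\mathbb{Q}_k(Z_{k,1}=1\mid Z_{k,0}=1)=\mathfrak{p}_u$ and $\mathbb{Q}_k(Z_{k,1}=0\mid Z_{k,0}=1)=\mathfrak{q}_u$ from (\ref{transition}) then closes the step.

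The hard part will be making the time-shift identity airtight, since $\mathbb{Q}_k$ and $\mathbb{Q}_{k+1}$ live on a priori different spaces $(\Omega_k,\mathcal{F}_k,\mathbb{Q}_k)$ and $(\Omega_{k+1},\mathcal{F}_{k+1},\mathbb{Q}_{k+1})$. Two things must be checked carefully: first, the index arithmetic separating the $Q$-transition block from the terminal big-move block, so that the shift aligns them correctly; and second, that the terminal price agrees across the two spaces. The latter is a bookkeeping computation: with $S_{k+1-H}=S_{k-H}u^zd^{1-z}$ and $I':=\sum_{m}\tilde Z_m$ over the shifted coordinates, the factored price $S_N=S_{k-H}u^{z+I'}d^{(\widetilde N+H)-z-I'}$ under the shifted $\mathbb{Q}_k$ chain coincides with $S_{k+1-H}u^{I'}d^{(\widetilde N-1+H)-I'}$ under $\mathbb{Q}_{k+1}$, which gives $\mathbb{E}^{\mathbb{Q}_k}(\Phi(S_N)\mid Z_{k,0}=1,Z_{k,1}=z)=\mathbb{E}^{\mathbb{Q}_{k+1}}(\Phi(S_N)\mid Z_{k+1,0}=z)$ and completes the argument.
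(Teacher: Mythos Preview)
Your proposal is correct and follows essentially the same route as the paper's proof: both verify the formulas by backward induction, handling the base case $k=N-1$ via (\ref{rep})/(\ref{dp}) and the terminal block (\ref{lastbigmove}), and the inductive step by conditioning on the first move $Z_{k,1}$ and invoking the Markov/time-shift identification between the $\mathbb{Q}_k$- and $\mathbb{Q}_{k+1}$-chains. Your write-up is somewhat more careful than the paper's about the bookkeeping of the time shift (the paper simply asserts the identity ``by the way the spaces are constructed''), but the argument is the same.
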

 \begin{proof}
 We need to show that (\ref{valup}) and (\ref{valdown}) satisfy the recursive equation (\ref{recur}) for $k=H,\dots,N-2$, and equation (\ref{dp}) for $k=N-1$. For $k=N-1$, it is already shown in (\ref{rep}), and for $k=H,\dots,N-2$, by conditioning on $Z_{k,1}$, (\ref{valup}) satisfies
 \begin{eqnarray}
 \mathcal{V}_{k}(S_{k-H},S_{k-H}u^{H})&=&e^{-r\widetilde{N}}\mathbb{E}^{\mathbb{Q}_{k}}\left(\Phi\left(S_{N}\right) \lvert Z_{k,0}=1 \right),  \nonumber \\
                                  &=&e^{-r\widetilde{N}}\Bigg[\mathbb{E}^{\mathbb{Q}_{k}}\left(\Phi\left(S_{N}\right) \lvert Z_{k,0}=1, Z_{k,1}=0\right)\mathbb{Q}_{k}(Z_{k,1}=0 \lvert Z_{k,0}=1)  \nonumber \\
                                  & & + \mathbb{E}^{\mathbb{Q}_{k}}\left(\Phi\left(S_{N}\right) \lvert Z_{k,0}=1, Z_{k,1}=1\right)\mathbb{Q}_{k}(Z_{k,1}=1 \lvert Z_{k,0}=1)\Bigg]. \nonumber
 \end{eqnarray}
 
 Note that by the way the spaces $(\Omega_{k},\mathcal{F}_{k},\mathbb{Q}_{k})$ and $(\Omega_{k+1},\mathcal{F}_{k+1},\mathbb{Q}_{k+1})$ are constructed, 
 \begin{eqnarray}
 \mathbb{E}^{\mathbb{Q}_{k}}\left(\Phi\left(S_{N}\right) \lvert Z_{k,0}=1, Z_{k,1}=0\right)=e^{-r}\mathbb{E}^{\mathbb{Q}_{k+1}}\left(\Phi\left(S_{N}\right) \lvert Z_{k+1,0}=0\right), \nonumber
 \end{eqnarray}
 \begin{eqnarray}
 \mathbb{E}^{\mathbb{Q}_{k}}\left(\Phi\left(S_{N}\right) \lvert Z_{k,0}=1, Z_{k,1}=1\right)=e^{-r}\mathbb{E}^{\mathbb{Q}_{k+1}}\left(\Phi\left(S_{N}\right) \lvert Z_{k+1,0}=1\right). \nonumber
 \end{eqnarray}
 Also, $\mathbb{Q}_{k}(Z_{k,1}=0 \lvert Z_{k,0}=1)=\mathfrak{q}_{u}$ and $\mathbb{Q}_{k}(Z_{k,1}=1 \lvert Z_{k,0}=1)=\mathfrak{p}_{u}$. Therefore,
 \begin{eqnarray}
 \mathcal{V}_{k}(S_{k-H},S_{k-H}u^{H})&=&e^{-r\widetilde{N}}\left[\mathfrak{p}_{u}\mathbb{E}^{\mathbb{Q}_{k+1}}\left(\varphi\left(S_{N}\right) \lvert Z_{k+1,0}=1\right)+\mathfrak{q}_{u}\mathbb{E}^{\mathbb{Q}_{k+1}}\left(\varphi\left(S_{N}\right) \lvert Z_{k+1,0}=0\right) \right], \nonumber \\
 &=& e^{-r}\bigg [\mathfrak{p}_{u}\mathcal{V}_{k+1}(S_{k-H+1}=S_{k-H}u,S_{k+1}=S_{k-H}u^{H+1}) \nonumber \\
 & & + \mathfrak{q}_{u}\mathcal{V}_{k+1}(S_{k-H+1}=S_{k-H}d,S_{k+1}=S_{k-H}d^{H+1})\bigg ], \nonumber
 \end{eqnarray}
 which completes the proof. Similarly, it can also be shown for (\ref{valdown}).
 \end{proof}
 \begin{remark}
 \label{suprepexpl}
 If we are interested just to find out the time-$0$ super-replication price $\bar{\pi}(\varphi)$, we only need the probability space $(\Omega_{H},\mathcal{F}_{H},\mathbb{Q}_{H})$ where $\Omega_{H}=\{0,1\}^{N}$. Then, we would have
 \begin{eqnarray}
 \bar{\pi}(\varphi)=e^{-rN} \max\left\{\mathbb{E}^{\mathbb{Q}_{H}}\left(\Phi\left(S_{N}\right) \lvert Z_{H,0}=1 \right),\mathbb{E}^{\mathbb{Q}_{H}}\left(\Phi\left(S_{N}\right) \lvert Z_{H,0}=0 \right)\right\}.
 \end{eqnarray}
 \end{remark}
 
 Lemma \ref{lem1}, whose proof can be found in the appendix, calculates $\mathbb{E}^{\mathbb{Q}_{k}}\left(\Phi\left(S_{N}\right) \lvert Z_{k,0}=1\right)$, $k=H,\dots,N-1$. For $H+1 \leq i \leq \widetilde{N}+H-1, 1 \leq j \leq \min(i-H,\widetilde{N}+H-i)$. Define
 \begin{eqnarray}
 \label{fij}
  f(i,j)&:=&{\widetilde{N}+H-i-1 \choose j-1}{i-H \choose j} \mathfrak{q}_{u}^{(j)} \mathfrak{q}_{d}^ {(\widetilde{N}+H-i-j)} \mathfrak{p}_{u}^{(i-j-H)} \mathfrak{p}_{d}^{(j)}. 
 \end{eqnarray}
 Also for $0 \leq i \leq \widetilde{N}-2, 1 \leq j \leq \min(i+1,\widetilde{N}-i-1)$, define
 \begin{eqnarray}
 \label{hij}
  h(i,j) &:=&{\widetilde{N}-i-2 \choose j-1}{i \choose j-1} \mathfrak{q}_{u}^{(j)} \mathfrak{q}_{d}^{(\widetilde{N}-i-j)} \mathfrak{p}_{u}^{(i-j+1)} \mathfrak{p}_{d}^{(j-1)} \nonumber \\
 	& & + {\widetilde{N}-i-2 \choose j-1}\left[{i+1 \choose j}-{i \choose j-1}\right] \mathfrak{q}_{u}^{(j+1)} \mathfrak{q}_{d}^{(\widetilde{N}-i-j-1)} \mathfrak{p}_{u}^{(i-j)} \mathfrak{p}_{d}^{(j)} . 
 \end{eqnarray}
  
 \begin{lemma}
 \label{lem1}
 For a function $\Phi(S_{N}) \in \mathbb{L}^{\infty}(\Omega_{k},\mathfrak{F}_{k},\mathbb{Q}_{k})$, $k=H,\dots,N-1$, the conditional expectation $\mathbb{E}^{\mathbb{Q}_{k}}\left(\Phi\left(S_{N}\right) \lvert Z_{k,0}=1 \right)$ can be explicitly calculated as 
 \begin{eqnarray}
 \mathbb{E}^{\mathbb{Q}_{k}}\left(\Phi\left(S_{N}\right) \lvert Z_{k,0}=1 \right) = \sum\limits_{i=0}^{\widetilde{N}+H} \mathbb{Q}_{k}\left(S_N=S_{k-H} u^i d^{\widetilde{N}+H-i} \lvert Z_{k,0}=1\right)\Phi(S_{k-H} u^i d^{\widetilde{N}+H-i}) , 
 \end{eqnarray}
 where  $\,\mathbb{Q}_{k}\left(S_N=S_{k-H} u^i d^{\widetilde{N}+H-i}\,  \lvert \, Z_{k,0}=1\right) \,$ is given by 
 \begin{eqnarray}
 \label{measurexx}
 \left \{ 
 \begin{array}{ll}
 \sum\limits_{j=1}^{\min(i+1,\widetilde{N}-i-1)} h(i,j) \,& 0 \leq i \leq H;   \\
 \\
 \sum\limits_{j=1}^{\min(i+1,\widetilde{N}-i-1)} h(i,j)+\sum\limits_{j=1}^{\min(i-H,\widetilde{N}+H-i)} f(i,j) \,& H+1 \leq i \leq \widetilde{N}-2; \\
 \\
 \mathfrak{p}_{u}^{(\widetilde{N}-1)} \mathfrak{q}_{u} + \sum\limits_{j=1}^{\min(\widetilde{N}-H -1,H+1)} f(i,j) \,& i=\widetilde{N}-1;   \\
 \\
 \sum\limits_{j=1}^{\min(i-H,\widetilde{N}+H-i)} f(i,j) \,& \widetilde{N} \leq i \leq \widetilde{N}+H-1;  \\
 \\
 \mathfrak{p}_{u}^{(\widetilde{N})}	\,& i=\widetilde{N}+H. 
 \end{array}
 \right . 
 \end{eqnarray} 
 \end{lemma}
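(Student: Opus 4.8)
The plan is to reduce the statement to a single combinatorial computation: the conditional law of the total number of up-steps. Since $S_N = S_{k-H}\,u^{I}d^{\widetilde{N}+H-I}$ with $I := \sum_{l=1}^{\widetilde{N}+H} Z_{k,l}$, the random variable $\Phi(S_N)$ is a function of $I$ alone, so the displayed identity is just the law of total expectation once one knows $w_i := \mathbb{Q}_{k}(I = i \mid Z_{k,0}=1)$ for each $i$. Thus the whole content is to show that $w_i$ equals the piecewise expression (\ref{measurexx}), and I would restate the lemma in this equivalent form first.

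The key structural reduction is to collapse the terminal block. By (\ref{lastbigmove}) the $H+1$ final coordinates $Z_{k,\widetilde{N}},\dots,Z_{k,\widetilde{N}+H}$ are all equal to one common value $B\in\{0,1\}$, and the conditional probability of that event depends on $Z_{k,\widetilde{N}-1}$ through exactly the same entries of $Q$ in (\ref{transition}) that govern an ordinary step. Hence I would replace the block by a single macro-step and work with the augmented path $\tilde s_0 = 1 \to \tilde s_1 \to \cdots \to \tilde s_{\widetilde{N}} = B$, whose $\mathbb{Q}_{k}$-weight is the monomial $\mathfrak{p}_u^{n_{uu}}\mathfrak{p}_d^{n_{du}}\mathfrak{q}_u^{n_{ud}}\mathfrak{q}_d^{n_{dd}}$, where $n_{uu},n_{du},n_{ud},n_{dd}$ count the four transition types along the $\widetilde{N}$ edges. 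Conditioning on $B$ splits the computation in two: the up-block case $B=1$, where $I = A + (H+1)$, and the down-block case $B=0$, where $I = A$, with $A := \sum_{l=1}^{\widetilde{N}-1} Z_{k,l}$.

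I would then enumerate these augmented paths by their run structure. For a $\{0,1\}$-word starting at $1$ and ending at $B$ with a prescribed number of ones, the numbers of maximal $1$-runs and $0$-runs determine $n_{du}$ and $n_{ud}$ (ascents and descents), while $n_{uu} = (\#\text{ones}) - (\#\text{one-runs})$ and $n_{dd} = (\#\text{zeros}) - (\#\text{zero-runs})$; the number of words with a given run profile is the product of two compositions (\emph{stars and bars}), namely $\binom{(\#\text{ones})-1}{(\#\text{one-runs})-1}\binom{(\#\text{zeros})-1}{(\#\text{zero-runs})-1}$. In the up-block case the word has $i-H+1$ ones and $\widetilde{N}+H-i$ zeros; indexing by the number $j$ of zero-runs reproduces exactly the monomial and the product $\binom{i-H}{j}\binom{\widetilde{N}+H-i-1}{j-1}$ of $f(i,j)$, and the admissible range of $j$ gives the stated summation limits. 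In the down-block case the word has $1+i$ ones and $\widetilde{N}-i$ zeros, and indexing by the number $m$ of zero-runs yields the single clean contribution $\binom{i}{m-1}\binom{\widetilde{N}-i-1}{m-1}$ times the monomial $\mathfrak{p}_u^{i-m+1}\mathfrak{p}_d^{m-1}\mathfrak{q}_u^{m}\mathfrak{q}_d^{\widetilde{N}-i-m}$.

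The last and most delicate step is to match the down-block sum to the two-term quantity $h(i,j)$. Here I would regroup the down-block sum by distinct monomials rather than by run number: the monomial attached to $m=j$ appears in the first piece of $h(i,j)$ and, after using $\binom{i+1}{j} - \binom{i}{j-1} = \binom{i}{j}$, the same monomial reappears in the second piece of $h(i,j-1)$; adding the two coefficients via Pascal's rule $\binom{\widetilde{N}-i-2}{j-1} + \binom{\widetilde{N}-i-2}{j-2} = \binom{\widetilde{N}-i-1}{j-1}$ recovers precisely $\binom{i}{j-1}\binom{\widetilde{N}-i-1}{j-1}$. Finally I would dispatch the boundary indices by hand: $i=\widetilde{N}+H$ forces the all-up path of weight $\mathfrak{p}_u^{\widetilde{N}}$; $i=\widetilde{N}-1$ contributes the isolated down-block term $\mathfrak{p}_u^{\widetilde{N}-1}\mathfrak{q}_u$ (the single $m=1$ configuration) on top of the $f$-terms; and for $0\le i\le H$ the up-block is impossible, so only $h$-terms survive. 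The main obstacle is exactly this bookkeeping: aligning the clean run-count with the author's two-term indexing of $h$, keeping the ranges of $i$ and $j$ consistent, and verifying the extreme cases. A useful sanity check is the specialization $H=0$, where the four parameters collapse and Vandermonde's identity reduces the whole expression to $\binom{\widetilde{N}}{i}\mathfrak{p}^{i}\mathfrak{q}^{\widetilde{N}-i}$, the ordinary binomial law.
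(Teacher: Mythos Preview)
Your proposal is correct and follows essentially the same route as the paper's proof: collapse the terminal $(H+1)$-block into a single macro-step, condition on whether it is up or down, and enumerate the resulting $\{0,1\}$-words by their run structure (what the paper calls ``downward groups''), which yields $f(i,j)$ in the up-block case and the $h$-contribution in the down-block case. The only minor deviation is in the down-block case: the paper conditions further on the direction of the penultimate single step to produce the two-term form of $h(i,j)$ directly, whereas you first obtain the cleaner single-term coefficient $\binom{i}{m-1}\binom{\widetilde{N}-i-1}{m-1}$ and then match it to $\sum_j h(i,j)$ via Pascal's rule---a slightly tidier bookkeeping that lands in the same place.
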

 
 Similarly, Lemma \ref{lem2} calculates $\mathbb{E}^{\mathbb{Q}_{k}}\left(\Phi\left(S_{N}\right) \lvert Z_{k,0}=0\right)$, $k=H,\dots,N-1$. 
 Also for $H+2 \leq i \leq \widetilde{N}+H, 1 \leq j \leq \min(i-H-1,\widetilde{N}+H-i+1)$, define
 \begin{eqnarray}
  \widetilde{f}(i,j) &:=&{i-H-2 \choose j-1}{\widetilde{N}+H-i \choose j-1} \mathfrak{q}_{u}^{(j-1)} \mathfrak{q}_{d}^{(\widetilde{N}+H-i-j+1)} \mathfrak{p}_{u}^{(i-j-H)} \mathfrak{p}_{d}^{(j)} \nonumber \\
 	&+& {i-H-2 \choose j-1}\left[{\widetilde{N}+H-i+1 \choose j}-{\widetilde{N}+H-i \choose j-1}\right] \mathfrak{q}_{u}^{(j)} \mathfrak{q}_{d}^{(\widetilde{N}+H-i-j)} \mathfrak{p}_{u}^{(i-j-H-1)} \mathfrak{p}_{d}^{(j+1)}.
 \end{eqnarray}
 
 For $1 \leq i \leq \widetilde{N}-1, 1 \leq j \leq \min(i,\widetilde{N}-i)$, define 
 \begin{eqnarray}
  \widetilde{h}(i,j)&:=&{i-1 \choose j-1}{\widetilde{N}-i \choose j} \mathfrak{q}_{u}^{(j)} \mathfrak{q}_{d}^ {(\widetilde{N}-i-j)} \mathfrak{p}_{u}^{(i-j)} \mathfrak{p}_{d}^{(j)}.  
 \end{eqnarray}
%
 
 \begin{lemma}
 \label{lem2}
 For a function $\Phi(S_{N}) \in \mathbb{L}^{\infty}(\Omega_{k},\mathfrak{F}_{k},\mathbb{Q}_{k})$, $k=H,\dots,N-1$, the conditional expectation $\mathbb{E}^{\mathbb{Q}_{k}}\left(\Phi\left(S_{N}\right) \lvert Z_{k,0}=1 \right)$ can be explicitly calculated as 
 \begin{eqnarray}
 \mathbb{E}^{\mathbb{Q}_{k}}\left(\Phi\left(S_{N}\right) \lvert Z_{k,0}=1 \right) = \sum\limits_{i=0}^{\widetilde{N}+H} \mathbb{Q}_{k}\left(S_N=S_{k-H} u^i d^{\widetilde{N}+H-i} \lvert Z_{k,0}=1\right)\Phi(S_{k-H} u^i d^{\widetilde{N}+H-i}) , 
 \end{eqnarray}
 where  $\,\mathbb{Q}_{k}\left(S_N=S_{k-H} u^i d^{\widetilde{N}+H-i} \, \lvert \, Z_{k,0}=1\right)\,$ is given by 
 
 \begin{eqnarray}
 \left \{ 
 \begin{array}{ll}
      \mathfrak{q}_{d}^{(\widetilde{N})}                               \,& i=0  \\
 \\
 \sum\limits_{j=1}^{\min(i,\widetilde{N}-i)} \widetilde{h}(i,j) \,& 1 \leq i \leq H;  \\
 \\
 \mathfrak{q}_{d}^{(\widetilde{N}-1)} \mathfrak{p}_{d} + \sum\limits_{j=1}^{  \min(H+1,\widetilde{N}-H-1) } \widetilde{h}(i,j) \,& i=H+1        ;  \\
 \\
 \sum\limits_{j=1}^{    \min(i,\widetilde{N}-i)                   } \widetilde{h}(i,j)+\sum\limits_{j=1}^{\min(i-H-1,\widetilde{N}+H-i+1)} \widetilde{f}(i,j) \,& H+2 \leq i \leq \widetilde{N}-1;           \\
 \\
 \sum\limits_{j=1}^{ \min(i-H-1,\widetilde{N}+H-i+1) } \widetilde{f}(i,j) \,& \widetilde{N} \leq i \leq \widetilde{N}+H  . \\     
 \end{array}
 \right . 
 \end{eqnarray} 
 \end{lemma}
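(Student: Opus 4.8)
The plan is to reduce the assertion to a purely combinatorial count under the Markov measure $\mathbb{Q}_k$. The first displayed equality is just the law of total probability: under $\mathbb{Q}_k$ the terminal price $S_N = S_{k-H}\,u^{I_{k,\widetilde{N}+H}}d^{\widetilde{N}+H-I_{k,\widetilde{N}+H}}$ takes only the $\widetilde{N}+H+1$ values indexed by $i = I_{k,\widetilde{N}+H}\in\{0,\dots,\widetilde{N}+H\}$, so conditioning on $Z_{k,0}=1$ and summing over these values gives the stated decomposition. Hence all the content lies in computing the conditional law $\mathbb{Q}_k\big(I_{k,\widetilde{N}+H}=i \mid Z_{k,0}=1\big)$, and the task is to show it equals the displayed piecewise expression assembled from $\widetilde{f}$ and $\widetilde{h}$. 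I would carry this out exactly as in the appendix proof of Lemma~\ref{lem1}, adapting the bookkeeping to the present initial condition.

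First I would unfold the structure of a $\mathbb{Q}_k$-path. The chain takes $\widetilde{N}-1$ ordinary steps governed by the transition matrix $Q$ in (\ref{transition}), after which the forced block (\ref{lastbigmove}) appends $H+1$ identical moves in a common direction $s\in\{0,1\}$, carrying the single weight $Q_{Z_{k,\widetilde{N}-1},\,s}$. Writing $W_l := Z_{k,l}$ for $l\le\widetilde{N}-1$ and $W_{\widetilde{N}}:=s$, the $\mathbb{Q}_k$-weight of a path is the product $\prod_{l=1}^{\widetilde{N}}Q_{W_{l-1},W_l}$, while the terminal up-count is $i = \#\{\,1\le l\le\widetilde{N}-1 : W_l=1\,\}+(H+1)s$. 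Classifying the $\widetilde{N}$ transitions of $(W_0,\dots,W_{\widetilde{N}})$ into the four types, with multiplicities $\alpha,\beta,\gamma,\delta$ contributing $\mathfrak{p}_u^{\alpha}\mathfrak{q}_u^{\beta}\mathfrak{p}_d^{\gamma}\mathfrak{q}_d^{\delta}$, the path-weight depends only on these multiplicities, and $i=\alpha+\gamma+H$ when $s=1$ while $i=\alpha+\gamma$ when $s=0$.

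The core step is then a runs count. For a binary word beginning at the prescribed initial state and ending at $s$, the numbers of ascents and descents are forced to differ by $0$ or $1$ according to the endpoints, so once I index by $j:=$ the number of descents, the remaining multiplicities $\alpha,\gamma,\delta$ are pinned down by $i$ and $j$. The number of words realizing a given profile is a product of two binomial coefficients, obtained by distributing the up-symbols and the down-symbols into their respective runs (a standard stars-and-bars / ballot count). Conditioning on the block direction $s$ splits the conditional law into two families of terms — one single-term family and one carrying two sub-terms that reflects the endpoint of the last run — which are precisely the $\widetilde{h}$-type and $\widetilde{f}$-type contributions; summing over the admissible range of $j$ yields the interior case $H+2\le i\le\widetilde{N}-1$. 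The boundary values $i=0$, $i=H+1$, $\widetilde{N}\le i\le\widetilde{N}+H$, together with the top monomial, are recovered by the same count with degenerate runs, which is exactly why those rows keep only one of the two sums or collapse to a single monomial such as $\mathfrak{q}_d^{(\widetilde{N})}$.

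The main obstacle is the exactness of this runs bookkeeping: one must verify that the two binomial factors in each term count precisely the interleavings of up-runs and down-runs compatible with the fixed initial state, the chosen terminal state $s$, and the forced block (which removes the last run's remaining degree of freedom), and that the admissible range of $j$ is correctly truncated at each boundary. Once the generic interior case is verified, the endpoint rows follow by specialization, and since the entire count goes through identically after interchanging the roles of the up- and down-transition weights, the argument simultaneously delivers the companion identity for the opposite initial value; selecting the correct initial state then matches the displayed piecewise expression and completes the proof.
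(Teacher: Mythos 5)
Your proposal is correct and takes essentially the same route as the paper: the paper proves Lemma \ref{lem2} by a one-line reduction to the appendix proof of Lemma \ref{lem1} (``upward groups instead of downward groups''), and your runs-counting argument --- law of total probability, the forced $(H+1)$-block treated as a single weighted move, two binomial factors from distributing up- and down-moves into runs, with $\widetilde{h}$ as the single-term family and $\widetilde{f}$ as the two-sub-term family, and degenerate runs accounting for the boundary rows and the monomial $\mathfrak{q}_{d}^{(\widetilde{N})}$ --- is precisely that counting adapted to the start-down initial state. You also handled the statement's conditioning correctly: the displayed formulas pertain to the initial value $Z_{k,0}=0$ (the ``$Z_{k,0}=1$'' in the lemma is evidently a copy-paste slip, as the surrounding text and the paper's own proof confirm), and your closing remark on interchanging the up- and down-transition weights and ``selecting the correct initial state'' matches the paper's treatment.
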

 \begin{proof}
 The proof follows very similarly as that of Lemma \ref{lem1} with only this difference that since $Z_{k,0}=0$, we look for upward groups instead of downward groups.
 \end{proof}
 
 \subsection{Geometrical Representation}
 \label{geometry}
 In this subsection, we first discuss Theorem \ref{thm1} from a geometrical perspective Then, we represent the dynamic programming approach in subsection \ref{dynamic} geometrically For convenience, assume that interest rate $r=0$, and $H=1$ in this subsection. \par
 In Theorem \ref{thm1}, we discussed that in an $N$-period binomial model with $H=N-1$ periods of delay, for a European-style convex contingent claim with payoff function $\varphi:=\Phi(S_{N}) \in \mathbb{L}^{\infty}(\Omega,\mathfrak{F},\mathbb{P})$, there exist $\Delta_H ^{\ast}$ and $x_{0}^{\ast}$ such that
 \begin{eqnarray}
 \mathcal{V}_{H}(S_{0},S_{H})\,=\,x_{0}^{\ast}+\Delta_{H}^{\ast}S_{H}.
 \end{eqnarray}
 This suggests that there exists a line with slope $\Delta_{H}^{\ast}$ and intercept $x_{0}^{\ast}$ such that the super-replicating value function $\mathcal{V}_{H}(S_{0},S_{H})$ lie on that line. Figure \ref{figg2} shows this optimal line, the super-replication price, and the super-replicating value functions in a $2$-period binomial model with $1$ period of delay.
 \begin{figure}
 
 \centering
 \begin{tikzpicture}[scale=1.5]
   \draw[->] (-0.2,0) -- (4,0) node[right] {$S_2$};
   \draw[->] (0,-0.2) -- (0,2.25) node[above] {$\Phi(S_2)$};
 \foreach \x/\xtext in {0.7/\text{\scriptsize $S_0 d^2$}, 1.25/\text{\scriptsize $S_0 d$}, 1.65/\text{\scriptsize $S_0$}, 2/\text{\scriptsize $S_0 ud$}, 2.5/\text{\scriptsize $S_0 u$}, 3/\text{\scriptsize $S_0 u^2$}}
     \draw[shift={(\x,0)}] (0pt,2pt) -- (0pt,-2pt) node[below] {$\xtext$};
 \node (b) at (0,0.27) [left] {$x_{0}^{\ast}$};
 \node (ff) at (2.5,1.56){};
   \draw (0.25,1) parabola bend (1.5,0.5) (3.25,2.25) node[right] {$\Phi$};
  \draw [dashed] (2.5,0) coordinate (a_1) -- (ff) ;
 \draw [dashed] (0.7,0) -- (0.7,0.7);
 \fill (0.7,0.7) circle (1pt);
 \draw [dashed] (3,0) coordinate (b_1) -- (3,1.8) coordinate (b_2);
 \draw [dashed] (1.25,0) coordinate (x_1) -- (1.25,0.96) coordinate (x_2);
 \draw [dashed] (x_2) -- (0,0.96) coordinate (x_3);
 \draw [dashed] (3,0) coordinate (b_1) -- (3,1.8) coordinate (b_2);
 \draw (b) -- (3.5,2.03) node [right]{\text{Optimal Line}};
 \fill (ff) circle (1pt);
 \fill (b_2) circle (1pt);
 \fill (x_2) circle (1pt);
 \fill (2,0.65) circle (1pt);
 \node (c) at (0,1.56) [left] {$\mathcal{V}_{1}(S_{0},S_{1}=S_0u)$};
 \node (d) at (0,0.96) [left] {$\mathcal{V}_{1}(S_{0},S_{1}=S_0d)$};
 \draw [dashed] (ff) -- (c);
 \end{tikzpicture}
 \caption{Super-replicating Strategy in a $2$-period binomial Model with a $1$-period Delay. {The optimal line characterizes the super-replicating strategy. The slope of it is $\Delta_1 ^{\ast}$ and its intercept is  $x_{0}^{\ast}$}. The super-replication price is $\bar{\pi}(\varphi)=\max\left\{\mathcal{V}_{1}(S_{0},S_{1}=S_0d),\mathcal{V}_{1}(S_{0},S_{1}=S_0u)\right\}$.
 }
 \label{figg2}
 \
 \end{figure}
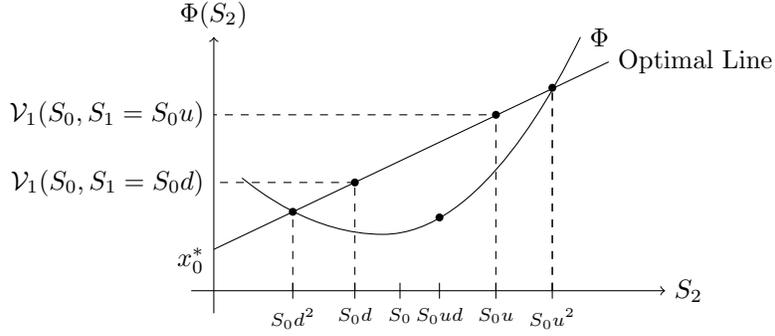
 
 It is more intuitive to demonstrate the dynamic programming approach in subsection \ref{dynamic} geometrically. Figure \ref{fign4} shows a $4$-period binomial model with $1$-period delay. Figure \ref{ffgg} shows how to 
 geometrically find the super-replication price for a contingent claim with convex payoff function ($\Phi(.)$). For convenience and to avoid a clutter of points on the $x$-axis, suppose $u d =1$, so some of the points in the model lie on each other.\par 
 
  Now in order to find the super-replication prices at time $3$, it is necessary to consider the three $2$-period binomial models with $1$-period delay $\mathcal T_{2} (2,0)$, $\mathcal T_{2} (1,1)$ and $\mathcal T_{2} (0,2)$. In Figure \ref{ffgg}, the lines ($om$), ($nl$) and ($mk$) show the optimal super-replication lines for each of these models
 respectively. As it can be seen, there are two payoffs at either of the nodes $S_{3}=S_{0}u^{2}d$ and $S_{3}=S_{0}ud^{2}$  depending on which subtree is used for pricing (i.e. depending on what $S_{1}$ is). Now, we go one period further back to find out 
the payoffs at time $2$. We need to consider two $2$-period models $\mathcal T_{2} (1,0)$ and $\mathcal T_{2} (0,1)$. Note that in each of these two models, the corresponding payoff at nodes $S_{2}=S_{0}u^{2}d$ (out of two payoffs $\Phi_{\mathcal T_{2} \left(1,0\right)}\left(2,1\right) \,$ and $\Phi_{\mathcal T_{2} \left(0,1\right)}\left(2,1\right) \,$) and $S_{2}=S_{0}ud^{2}$ (out of two payoffs $\Phi_{\mathcal T_{2} \left(1,0\right)}\left(1,2\right) \,$ and $\Phi_{\mathcal T_{2} \left(1,0\right)}\left(1,2\right) \,$) needs to be chosen. As Theorem (\ref{thm2}) suggests,  the payoff functions for both of these models are convex. The lines ($jh$) and ($ig$) demonstrate the optimal lines for these models.  Similarly, to calculate the payoff at time $1$, the $2$-period model $\mathcal T_{2} (0,0)$ needs to be used and the line ($fd$) shows the optimal line for this model. Finally, we have the super-replication price $\bar{\pi}(\varphi)=\max\left\{\mathcal{V}_{1}(S_{0},S_{1}=S_0d),\mathcal{V}_{1}(S_{0},S_{1}=S_0u)\right\}$. \par

    \begin{figure}
 
    \centering
 \begin{tikzpicture}[scale=2.3]
 
   \draw[->] (-0.2,0) -- (3.5,0) node[right] {$S_4$};
   \draw[->] (0,-0.2) -- (0,2.25) node[above] {$\Phi(S_4)$};
 
 \foreach \x/\xtext in {0.25/\text{\tiny $S_0 d^4$},0.6/\text{\tiny $S_0 d^3$}, 1/\text{\tiny $S_0 u d^3$},1.4/\text{\tiny $S_0 u d^2$},
  1.75/\text{\tiny $S_0 u^2 d^2$},2.1/\text{\tiny $S_0 u^2 d$}, 2.5/\text{\tiny $S_0 u^3 d$},2.9/\text{\tiny $S_0 u^3$}, 3.25/\text{\tiny $S_0 u^4$}}
     \draw[shift={(\x,0)}] (0pt,2pt) -- (0pt,-2pt) node[below] {$\xtext$};

 \foreach \x/\xtext in {0.25/\text{},0.6/\text{}, 1/\text{\tiny $S_0 d^2$},1.4/\text{\tiny $S_0 d$},
  1.75/\text{\tiny $S_0 u d$},2.1/\text{\tiny $S_0 u $}, 2.5/\text{\tiny $S_0 u^2 $},2.9/\text{}, 3.25/\text{}}
     \draw[shift={(\x,-0.2)}] (0pt,0pt) -- (0pt,0pt) node[below] {$\xtext$};
 
 \foreach \x/\xtext in {0.25/\text{},0.6/\text{}, 1/\text{},1.4/\text{},
  1.75/\text{\tiny $S_0$},2.1/\text{}, 2.5/\text{},2.9/\text{}, 3.25/\text{}}
     \draw[shift={(\x,-0.3)}] (0pt,0pt) -- (0pt,0pt) node[below] {$\xtext$};

   \draw (0.1,1) parabola bend (1.8,0.5) (3.4,2.1) node[right]{$\Phi$};
 
 \node (k) at (0.25,0.91){};
 \fill (k) circle (0.5pt);
 \draw [dashed,very thin] (k) -- (0.25,0);
 
 \node (g) at (0.6,0.75){};
 \fill (g) circle (0.5pt);
 \draw [dashed,very thin] (g) -- (0.6,0);
 
 \node (l) at (1,0.61){};
 \fill (l) circle (0.5pt);
 \draw [dashed,very thin] (l) -- (1,0);
 
 \node (h) at (1.4,0.53){};
 \fill (h) circle (0.5pt);
 \draw [dashed,very thin] (h) -- (1.4,0);
 
 \node (m) at (1.75,0.5){};
 \fill (m) circle (0.5pt);
 \draw [dashed,very thin] (m) -- (1.75,0);
 
 \node (i) at (2.1,0.56){};
 \fill (i) circle (0.5pt);
 \draw [dashed,very thin] (i) -- (2.1,0);
 
 \node (n) at (2.5,0.815){};
 \fill (n) circle (0.5pt);
 \draw [dashed,very thin] (n) -- (2.5,0);
 
 \node (j) at (2.9,1.25){};
 \fill (j) circle (0.5pt);
 \draw [dashed,very thin] (j) -- (2.9,0);
 
 \node (o) at (3.25,1.82){};
 \fill (o) circle (0.5pt);
 \draw [dashed,very thin] (o) -- (3.25,0);

 \draw [red](0.25,0.91) -- node[pos=0.5,sloped, above=-0.3em]{\tiny mk} (1.75,0.5) ;
 \draw [red](2.5,0.815) -- node[pos=0.5,sloped, above=-0.3em]{\tiny nl} (1,0.61) ;
 \draw [red](3.25,1.82) -- node[pos=0.5,sloped, above=-0.3em]{\tiny om} (1.75,0.5) ;
 
 \node (jj) at (2.9,1.52){};
 \draw [dashed,very thin] (j) -- (jj);
 \fill (jj) circle (0.5pt);
 
 \node (hh) at (1.4,0.67){};
 \draw [dashed,very thin] (h) -- (hh);
 \fill (hh) circle (0.5pt);
 
 \draw [blue,densely dashed,semithick](2.9,1.52) -- node[pos=0.5,sloped, above=-0.3em]{\tiny jh} (1.4,0.67) ;
 
 \node (ii) at (2.1,0.755){};
 \draw [dashed,very thin] (i) -- (ii);
 \fill (ii) circle (0.5pt);
 
 \node (gg) at (0.6,0.82){};
 \draw [dashed,very thin] (g) -- (gg);
 \fill (gg) circle (0.5pt);
 
 \draw [blue,densely dashed,semithick](2.1,0.755) -- node[pos=0.5,sloped, above=-0.3em]{\tiny ig} (0.6,0.82) ;
 
 \node (nnn) at (2.5,1.29){};
 \draw [dashed,very thin] (n) -- (nnn);
 \fill (nnn) circle (0.5pt);
 
 \node (lll) at (1,0.805){};
 \draw [dashed,very thin] (l) -- (lll);
 \fill (lll) circle (0.5pt);
 
 \draw [magenta,dotted,thick](2.5,1.29) -- node[pos=0.5,sloped, above=-0.3em]{\tiny fd} (1,0.805) ;
 
 \node (iii) at (2.1,1.17){};
 \draw [dashed,very thin] (ii) -- (iii);
 \fill (iii) circle (0.5pt);
 
 \node (v) at (0, 1.17)[left]{$\mathcal{V}_{1}(S_{0},S_{1}=S_0u)$};
 \node (v2) at (0, 0.96)[left]{$\mathcal{V}_{1}(S_{0},S_{1}=S_0d)$};
 \draw [dashed,very thin] (iii) -- (v);
 \draw [dashed,very thin] (1.4,0.67) -- (1.4,0.94) coordinate (xx) ;
 \fill (xx) circle (0.5pt);
 \draw [dashed,very thin] (xx) -- (0,0.94) ;

 \end{tikzpicture}
 \caption{Geometrical Representation of the Super replicating Strategy in a $4$-period binomial Model with $1$-period Delay using a Dynamic Programming Approach}
 \label{ffgg}
 
 \end{figure}

 \section{Continuous Time Model}
 \label{continuous}
 In this section, we discuss the asymptotic behavior of the model. We define the probability spaces $(\Omega^{n},\mathfrak{F}^{n},\mathbb{Q}^{n})$, $n \in \mathbb{N}$ such that $\Omega^{n}=\{0,1\}^{n}$, and $\mathfrak{F}^{n}$ is the Borel $\sigma$-algebra on $\Omega^{n}$. For every $\omega^{n}=(\omega^{n}_{1},\dots,\omega^{n}_{n})\in \Omega^{n}$, we define a coordinate map by $Z^{n}_{\ell}(\omega^{n})=\omega^{n}_{\ell}$ for each  $\ell \in \{1,\dots,n\}$. Define the filtration $ \{\mathcal{F}^{n}_{\ell}, \ell = 0, \ldots , n\}$, where $\mathcal{F}^{n}_{\ell}$ is the  $\sigma$-field $\sigma(Z^{n}_{1},\dots,Z^{n}_{\ell})$ generated by the first $\ell$ variables for $\ell=1,\dots,n$ and $\mathcal{F}_{0}$ is the trivial $\sigma$-field. \par
 Let $\mu, \sigma, r \in [0,\infty)$, $H \in \mathbb{N}$, $T>0$ (fixed time horizon) and define the sequences
 \begin{eqnarray}
 &&\mu_{n}\,=\,\mu T \delta_{n}^{2}, \quad
 \sigma_{n}\,=\,\sigma \sqrt{T}\delta_{n}, \quad
 u_{n}\,=\,\exp{(\mu_{n}+\sigma_{n})},  \nonumber\\
 && d_{n}\,=\,\exp{(\mu_{n}-\sigma_{n})}, \quad
 r_{n}\,=\,r T \delta_{n}^{2}, \quad
 H_{n}\,=\, H T \delta_{n}^{2}, 
 \end{eqnarray}
 where the order $\delta_{n}\,=\,\nicefrac{1}{\sqrt{n}}$, as in Donsker's theorem. \par
 \begin{remark}
 $H$ characterizes the number of periods we have delayed information, which is constant in the asymptotic analysis. However, $H_{n}$ is the amount of time we have delayed information, which should vanish in the limit. Otherwise, the super-replication price would explode and converge to the maximum of the contingent claim payoff function. 
 \end{remark}
 \subsection{Price Process Asymptotic}
 Define the probability measures $\mathbb{Q}^{n}$, similar to (\ref{transition}) and (\ref{lastbigmove}), such that $Z^{n}_{\ell}, \ell = 1, \ldots , n$ with initial position $Z^{n}_{0}$ is a Markov chain, and for $ \ell=1,\dots,n-H-1$, it has transition matrix
 \begin{eqnarray}
 \label{transitionb} 
 Q_{n}=
 \left( \begin{array}{ccc}
 \mathfrak{q}_{n,d} \, \quad \mathfrak{p}_{n,d}   \\
 \mathfrak{q}_{n,u} \, \quad \mathfrak{p}_{n,u}  \end{array} \right) \quad  \text{on} \, \, \{0,1\}.
 \end{eqnarray}
 Besides, for $m=n-H,\dots,n$,
 \begin{eqnarray}
 \label{lastbigmoveb}
 \mathbb{Q}^{n}\left(Z^{n}_{n}=\dots=Z^{n}_{n-H}=1 \lvert Z^{n}_{n-H-1}=1\right)\,=\,\mathfrak{p}_{n,u}, \,
 \mathbb{Q}^{n}\left(Z^{n}_{n}=\dots=Z^{n}_{n-H}=-1 \lvert Z^{n}_{n-H-1}=1\right)\,=\,\mathfrak{q}_{n,u},  \nonumber  \\
 \mathbb{Q}^{n}\left(Z^{n}_{n}=\dots=Z^{n}_{n-H}=1 \lvert Z^{n}_{n-H-1}=0\right)\,=\,\mathfrak{p}_{n,d},  \,
 \mathbb{Q}^{n}\left(Z^{n}_{n}=\dots=Z^{n}_{n-H}=-1 \lvert Z^{n}_{n-H-1}=0\right)\,=\,\mathfrak{q}_{n,d}, \nonumber \\
 \end{eqnarray}
 where $\mathfrak{p}_{n,u}, \mathfrak{q}_{n,u}, \mathfrak{p}_{n,d}$ and $\mathfrak{q}_{n,d}$ are defined, similar to (\ref{measure}) with $j=0, H$, as
 \begin{eqnarray}
 \label{pnu}
 \mathfrak{p}_{n,d}:=\frac{d_{n}^{H}e^{r_{n}}-d_{n}^{H+1}}{u_{n}^{H+1}-d_{n}^{H+1}}=1-\mathfrak{q}_{n,d}, \hspace{3mm}
 \mathfrak{p}_{n,u}:=\frac{u_{n}^{H}e^{r_{n}}-d_{n}^{H+1}}{u_{n}^{H+1}-d_{n}^{H+1}}=1-\mathfrak{q}_{n,u}. \quad
 \end{eqnarray}
 Then, the risky asset price $S^{n}_{\ell}$, similar to (\ref{priceprocess}), satisfies  
 \begin{eqnarray}
 S_{\ell}^{n}\,=\,S_{0}\exp{\left[\ell \mu_{n} + \sigma_{n}\sum\limits_{i=1}^{\ell}X^{n}_{i}\right]},  \quad \ell=0,\dots,n , 
 \end{eqnarray}
 where $X_{i}^{n}=2Z_{i}^{n}-1$. The following Lemma \ref{asymp} provides asymptotic for $\mathfrak{p}_{n,u}$ and $\mathfrak{p}_{n,d}$. 
 \begin{lemma}
 \label{asymp}
 We have
 \begin{eqnarray}
 \mathfrak{p}_{n,u}=\frac{2H+1}{2(H+1)}-\left(\frac{\mu-r}{2\left(H+1\right)\sigma}+\frac{2H+1}{4(H+1)}\sigma \right)\sqrt{T}\delta_{n}+\mathcal{O}\left(\delta_{n}^{2}\right), \\
 \mathfrak{p}_{n,d}=\frac{1}{2(H+1)}-\left(\frac{\mu-r}{2\left(H+1\right)\sigma}+\frac{2H+1}{4(H+1)}\sigma \right)\sqrt{T}\delta_{n}+\mathcal{O}\left(\delta_{n}^{2}\right).
 \end{eqnarray}
 \end{lemma}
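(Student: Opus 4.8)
The plan is to obtain both expansions by a direct asymptotic expansion in the small parameter $\delta_n$, with careful bookkeeping of the orders at which the parameters enter. The decisive observation is that $\sigma_n=\sigma\sqrt{T}\,\delta_n$ is of order $\delta_n$, whereas $\mu_n=\mu T\delta_n^2$ and $r_n=rT\delta_n^2$ are of order $\delta_n^2$. Since the common denominator $u_n^{H+1}-d_n^{H+1}$ is itself of order $\delta_n$, recovering the ratio up through its linear term requires expanding both numerator and denominator through order $\delta_n^2$ and then dividing.

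First I would handle the denominator. Writing
\[
u_n^{H+1}-d_n^{H+1}\,=\,2\,e^{(H+1)\mu_n}\sinh\!\big((H+1)\sigma_n\big),
\]
and using that $\sinh$ is odd (so its expansion carries only odd powers of $\sigma_n$) together with $e^{(H+1)\mu_n}=1+\mathcal{O}(\delta_n^2)$ and the fact that the cross term $\mu_n\sigma_n$ is of order $\delta_n^3$, I get $u_n^{H+1}-d_n^{H+1}=2(H+1)\sigma_n+\mathcal{O}(\delta_n^3)$. It is worth stressing that the vanishing of the $\delta_n^2$ coefficient here is essential: because the leading denominator is already of order $\delta_n$, any surviving $\delta_n^2$ term would feed directly into the $\delta_n$ coefficient of the final ratio.

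Next I would expand the two numerators. Using $d_n^m=1+m(\mu_n-\sigma_n)+\tfrac12 m^2\sigma_n^2+\mathcal{O}(\delta_n^3)$ and $u_n^m=1+m(\mu_n+\sigma_n)+\tfrac12 m^2\sigma_n^2+\mathcal{O}(\delta_n^3)$ (the $\mu_n\sigma_n$ and $\mu_n^2$ terms being absorbed into the remainder) together with $e^{r_n}=1+r_n+\mathcal{O}(\delta_n^4)$, a short computation yields
\begin{align*}
d_n^{H}e^{r_n}-d_n^{H+1} &= \sigma_n-\mu_n+r_n-\tfrac{2H+1}{2}\sigma_n^2+\mathcal{O}(\delta_n^3),\\
u_n^{H}e^{r_n}-d_n^{H+1} &= (2H+1)\sigma_n-\mu_n+r_n-\tfrac{2H+1}{2}\sigma_n^2+\mathcal{O}(\delta_n^3).
\end{align*}
The two differ only in their order-$\delta_n$ term ($\sigma_n$ versus $(2H+1)\sigma_n$), while the order-$\delta_n^2$ block $-\mu_n+r_n-\tfrac{2H+1}{2}\sigma_n^2$ is common to both.

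Finally I would form the quotients. Dividing each numerator by $2(H+1)\sigma_n$ and substituting $\mu_n/\sigma_n=(\mu\sqrt{T}/\sigma)\delta_n$, $r_n/\sigma_n=(r\sqrt{T}/\sigma)\delta_n$, and $\sigma_n=\sigma\sqrt{T}\delta_n$, the leading terms give $\tfrac{1}{2(H+1)}$ and $\tfrac{2H+1}{2(H+1)}$, while the common order-$\delta_n^2$ block contributes the identical linear coefficient $-\big(\tfrac{\mu-r}{2(H+1)\sigma}+\tfrac{2H+1}{4(H+1)}\sigma\big)\sqrt{T}$ to both expansions; all discarded terms are $\mathcal{O}(\delta_n^3)$ before dividing and hence $\mathcal{O}(\delta_n^2)$ afterwards, matching the asserted remainder. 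The only genuine difficulty is the order-tracking: one must retain numerator and denominator to precisely the right order and confirm that the $\delta_n^2$ term of the denominator cancels, since neglecting either would corrupt the $\delta_n$ coefficient.
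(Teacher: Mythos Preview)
Your proposal is correct and follows exactly the approach the paper indicates: Taylor-expand $u_n$, $d_n$, $e^{r_n}$ and substitute into the defining ratios. The paper's own proof is a single sentence pointing to this computation without carrying out the details, so your write-up simply fills in the bookkeeping the authors omitted.
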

 \begin{proof}
 The proof simply follows by applying Taylor's expansion to $u_{n}$, $d_{n}$ and $r_{n}$, and plugging them in (\ref{pnu}). 
 \end{proof}
 Discretize the time interval by setting $t_{\ell}^{n}:=T\ell/n$. By interpolating over the intervals $[t_{\ell-1}^{n},t_{\ell}^{n})$ in a piecewise constant manner with ($S_{\ell}^{n}$, $\ell=0, \ldots , n$), we get the risky asset price process $S^{(n)}=(S^{(n)}_{t})_{0 \leq t \leq T}$ 
 \begin{eqnarray}
 S^{(n)}_{t}:=S^{n}_{\lfloor nt \rfloor/T}, \quad 0 \leq t \leq T, 
 \end{eqnarray}
 where $\lfloor . \rfloor$ is the floor function. \par
 The process $S^{(n)}$ has trajectories which are right continuous with left limits. Note that in particular
 \begin{eqnarray}
 S^{(n)}_{t_{\ell}^{n}}=S^{n}_{\ell}, \quad \ell=0,\dots,n.  \nonumber
 \end{eqnarray}
 Here, $S^{(n)}$ under measure $\mathbb{Q}^{n}$ is distributed according to a probability measure $\rho_{n}$ on the Skorokhod space $\mathbb{D}[0,T]$ of right continuous functions with left limits. Theorem \ref{contlimit} provides a weak convergence for the sequence $(\rho_{n})_{n \in \mathbb{N}}$. 
 \begin{theorem}
 \label{contlimit}
 The sequence of processes $(S^{(n)})_{n \in \mathbb{N}}$ converges in distribution to the process $(S_{t})_{0 \leq t \leq T}$ with dynamics
 \begin{eqnarray}
 \label{st}
 d S_{t}=r S_{t} d t + \widetilde{\sigma} S_{t} d W_{t}, \quad 0 \leq t \leq T, 
 \end{eqnarray}
 where $(W_{t})_{0 \leq t \leq T}$ is a Brownian motion, and we have the enlarged volatility 
 \begin{eqnarray}
 \label{enlargedvol}
 \widetilde{\sigma}=\sqrt{2H+1}\sigma.
 \end{eqnarray}
 \end{theorem}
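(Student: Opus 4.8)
The plan is to pass to the log-price process $L^n_t := \log(S^{(n)}_t/S_0) = \ell_n(t)\,\mu_n + \sigma_n\sum_{i=1}^{\ell_n(t)} X^n_i$, where $\ell_n(t)\sim nt/T$ is the number of steps up to time $t$, and to prove that $L^n\Rightarrow L$ in the Skorokhod space $\mathbb D[0,T]$ with $L_t = (r-\tfrac12\widetilde\sigma^2)t + \widetilde\sigma W_t$ and $\widetilde\sigma$ as in (\ref{enlargedvol}). Since $x\mapsto S_0 e^{x}$ is continuous on $\mathbb D[0,T]$, the continuous mapping theorem then gives $S^{(n)}\Rightarrow S_0 e^{L}$, and It\^o's formula identifies $S_0 e^{L_t}$ as the unique solution of the SDE (\ref{st}). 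The special terminal transition (\ref{lastbigmoveb}) alters only the last $H+1$ increments, changing $L^n$ by at most $(H+1)(|\mu_n|+\sigma_n)=O(\delta_n)$ uniformly in $t$, so it is negligible in the limit and the problem reduces to the time-homogeneous chain with transition matrix $Q_n$ in (\ref{transitionb}).

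First I would analyze the two-state chain $Q_n$. Its nontrivial eigenvalue is $\lambda_n = \mathfrak p_{n,u}-\mathfrak p_{n,d}$, and by Lemma (\ref{asymp}) we have $\lambda_n\to \tfrac{H}{H+1}<1$, so the chain mixes at a fixed geometric rate uniformly in $n$. The invariant law $\pi^{(n)}$, with $\pi^{(n)}_1=\mathfrak p_{n,d}/(\mathfrak p_{n,d}+\mathfrak q_{n,u})$, satisfies $\pi^{(n)}\to(\tfrac12,\tfrac12)$, and again by Lemma (\ref{asymp}) the stationary mean of $X=2Z-1$ expands as $\bar X_n:=\mathbb E_{\pi^{(n)}}[X] = -\big(\tfrac{\mu-r}{\sigma}+\tfrac{2H+1}{2}\sigma\big)\sqrt T\,\delta_n + O(\delta_n^2)$. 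Hence the per-step drift is $\mu_n+\sigma_n\bar X_n = T\delta_n^2\big(r-\tfrac{(2H+1)\sigma^2}{2}\big)+o(\delta_n^2)$, and summing over $\ell_n(t)\sim nt/T$ steps (using $n\delta_n^2=1$) gives the deterministic drift $\to(r-\tfrac{\widetilde\sigma^2}{2})t$, exactly the drift of $L_t$.

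For the fluctuation part I would use a martingale approximation. Let $g_n$ solve the Poisson equation $(I-Q_n)g_n = X-\bar X_n$ on the two-point state space; since $\lambda_n$ is bounded away from $1$, $g_n$ is bounded uniformly in $n$. The standard decomposition gives $\sum_{i=1}^\ell(X^n_i-\bar X_n)=M^n_\ell + g_n(Z^n_1)-g_n(Z^n_{\ell+1})$, where $M^n_\ell=\sum_{i=1}^\ell\big(g_n(Z^n_{i+1})-(Q_n g_n)(Z^n_i)\big)$ is a martingale for the natural filtration. The boundary term is $O(1)$, so it contributes $O(\sigma_n)=O(\delta_n)$ after scaling and vanishes uniformly in $t$; this term also absorbs any dependence on the initial law of $Z^n_0$. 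It then remains to apply a martingale functional CLT for triangular arrays (Ethier--Kurtz or Jacod--Shiryaev) to $\sigma_n M^n_{\ell_n(t)}$: the increments are $O(\delta_n)$, so the conditional Lindeberg condition is immediate, and the predictable quadratic variation is $\sigma_n^2\sum_{i}\mathbb E[(\Delta M^n_i)^2\mid\mathcal F^n_{i-1}]$.

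The crux --- and the step I expect to be the main obstacle --- is showing this predictable quadratic variation converges to $\widetilde\sigma^2 t$. By ergodicity of the fast-mixing chain the empirical average of the conditional variances $\mathbb E[(\Delta M^n_i)^2\mid\mathcal F^n_{i-1}]$ concentrates around its stationary value $\mathbb E_{\pi^{(n)}}\big[\mathrm{Var}(g_n(Z^n_1)\mid Z^n_0)\big]$, which is precisely the long-run variance of the chain. For the limiting symmetric two-state chain with switching probability $\tfrac{1}{2(H+1)}$ this long-run variance equals $\tfrac{1+\lambda}{1-\lambda}\big|_{\lambda=H/(H+1)} = 2H+1$, so $\sigma_n^2\cdot(2H+1)\cdot\ell_n(t)\to\sigma^2(2H+1)\,t=\widetilde\sigma^2 t$, matching (\ref{enlargedvol}). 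Combining the drift and martingale parts yields $L^n\Rightarrow L$, and the continuous mapping argument of the first paragraph completes the proof. The delicate points are the uniform-in-$n$ control of $g_n$ and the quantitative ergodic averaging that replaces the pathwise conditional variances by their stationary mean; both rest on $\lambda_n$ staying bounded away from $1$, so that equilibration happens on an $O(1)$ time scale that is microscopic relative to the $n$ steps.
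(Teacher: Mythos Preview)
Your argument is correct and complete in outline, but it takes a different route from the paper. The paper's proof is essentially a one-line citation: after using Lemma~\ref{asymp} to write the conditional probability as $\mathfrak p_n(\ell,x)=\tfrac12[1+\phi\,\delta_n+\lambda_n x]+O(\delta_n^2)$ with $\lambda_n\to H/(H+1)$, it invokes the functional central limit theorem of \cite{gruber2006diffusion} for generalized correlated random walks (their Theorem~1 and Remark~3), which directly outputs the diffusion limit with volatility $\sqrt{(1+\lambda)/(1-\lambda)}\,\sigma=\sqrt{2H+1}\,\sigma$, and then applies the continuous mapping theorem.

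Your approach instead reproves this special case of \cite{gruber2006diffusion} from first principles, via the Poisson-equation/martingale decomposition for the two-state chain and a martingale FCLT (Ethier--Kurtz or Jacod--Shiryaev). This is more self-contained and makes the mechanism behind the enlarged volatility transparent: the key computation $\tfrac{1+\lambda}{1-\lambda}\big|_{\lambda=H/(H+1)}=2H+1$ is exactly the long-run variance of the correlated $\pm1$ walk, and your drift computation $\mu_n+\sigma_n\bar X_n\to (r-\tfrac{\widetilde\sigma^2}{2})T\delta_n^2$ is the same content the paper hides inside the cited theorem. The trade-off is length versus insight: the paper's proof is a few lines because it outsources all the work, while yours would run to a page or two but requires no black box. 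Your handling of the terminal $(H{+}1)$-block and the initial condition (both absorbed into $O(\delta_n)$ boundary terms thanks to the uniform bound on $g_n$) is also correct and is a point the paper leaves implicit inside the citation.
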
 
 \begin{proof}
 First, note that
 \begin{eqnarray}
 \mathbb{Q}^{n}\left(X^{n}_{\ell}=1 \lvert  X^{n}_{\ell-1}\right)=\frac{1}{2}\left[\mathfrak{p}_{n,u}+\mathfrak{p}_{n,d}+X^{n}_{\ell-1}\left(\mathfrak{p}_{n,u}-\mathfrak{p}_{n,d}\right) \right], \quad \ell=1,\dots,n-H-1, \nonumber
 \end{eqnarray}
 According to Lemma \ref{asymp}, we conclude that
 \begin{eqnarray}
 \mathbb{Q}^{n}\left(X^{n}_{\ell}=1 \lvert  \mathcal{F}^{n}_{\ell-1}\right)=\mathfrak{p}_{\ell}\left(\ell,X^{n}_{\ell-1} \right), \quad \ell=1,\dots,n-H-1, \nonumber
 \end{eqnarray}
 where in the notation of \cite{gruber2006diffusion}
 \begin{eqnarray} \label{gaplambda}
 \mathfrak{p}_{n}(\ell,x)=\frac{1}{2}\left[1+\phi\delta_{n}+\lambda_{n}x \right]+\mathcal{O}(\delta_{n}^{2}), \,
 \phi=-2\left[\frac{\mu-r}{2\left(H+1\right)\sigma}+\frac{2H+1}{4\left(H+1\right)}\sigma\right]\sqrt{T}, \,
 \lambda_{n}=\frac{H}{H+1}+\mathcal{O}(\delta_{n}^{2}).  \nonumber \\
 \end{eqnarray}

Now  we apply a functional central limit theorem for generalized, correlated random walks in \cite{gruber2006diffusion} with $a_{n}(t,y):=\lambda_{n}$ and $b_n(t,y):=\phi$ for their Theorem 1 and Remark 3. It follows from the continuous mapping theorem  that $S^{(n)}$, regardless of the initial distribution of $X_{0}^{n}$, converges in distribution to $(S_{t})_{0 \leq t \leq T}$ in (\ref{st}). In particular, since $ \lim_{n\to \infty} \lambda_{n} \, =\, H/(H+1) $, we see the volatility 
 \begin{eqnarray}
 \widetilde{\sigma}= \sqrt{\frac{1+ \lim_{n\to \infty}a_{n}(t,Y_{t})}{1- \lim_{n\to \infty} a_{n}(t,Y_{t})}} \cdot \sigma = \sqrt{2H+1}\,  \sigma, \nonumber
 \end{eqnarray} 
is constant, and larger than $ \sigma $, where $Y_{t}=\log{S_{t}}$.
 
 \end{proof}
 
 \begin{remark}
 The enlarged volatility in the limit is due to the gap $\lambda_{n}=\mathfrak{p}_{n,u}-\mathfrak{p}_{n,d}$ in (\ref{gaplambda}) which is caused because of the delay in the flow of information (it would be zero when the number of delayed periods $H=0$). In fact, this is the main source making the price process under the pricing measure more volatile.
 \end{remark}

 \subsection{Exaggerated Volatility Smile}
 \label{smile}
 In this subsection, we discuss the volatility smile of the model, and how it evolves with the number of periods ($n$). Volatility smile is the graph of Black-Scholes implied volatility with respect to the strike price. Implied volatility is the value of the volatility in the Black-Scholes pricing model which generates a price equal to that of our model. Several market features, such as crashphobia, have been attributed as the culprits of the market smile. The volatility smile has been one of the central topics in option pricing literature, and many models have been developed to capture it. We refer to \cite{gatheral2011volatility} for more discussion in this regard. \par 
 
 Our model with delayed information shows that delayed information exaggerates the smile. Figure \ref{smile1} plots the volatility smiles for call and put options in the model with and without delayed information when $n=100$. In the model with delayed information ($H_{n}=\frac{1}{100} \text{ year}\approx 2.52 \text{ days}$), we observe volatility smile, on the contrary with the model without delayed information where we get an almost flat smile, which is expected according to the Remark \ref{Hzero}. Note that in the model with delayed information, we have different smiles for call and put option, and that is because  there is not any call-put parity, as discussed in Remark \ref{callputparity}. \par 
 
 Figure \ref{smile1} plots the volatility smiles for call and put options when the number of periods is very big ($n=250,000$) for the model with delayed information ($H_{n}=\frac{1}{250,000} \text{ year}\approx 30\text{ seconds}$). We observe almost the same flat volatility smiles for both call and put options, which can be also calculated by the theoretical results in \ref{enlargedvol}. \par 
 
 These volatility smiles in Figures \ref{smile1} and \ref{smile2} confirm the intuition of traders that delayed information would exaggerate the volatility smile, but it is not its culprit. This is because in the continuous limit, volatility is constant and there is no smile, but in the discrete model, we can observe volatility smile. Therefore, it conveys that the smile observed in the market might have been exaggerated by the way we interact with delayed information, and the smile might not be caused all by the market itself.

 \begin{figure}
 \centering
 \includegraphics[scale=0.75]{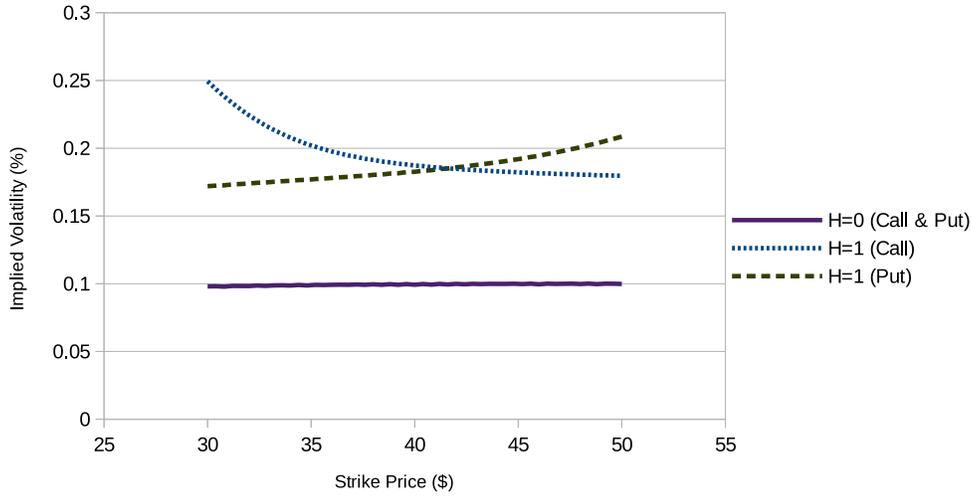}
 \caption{Volatility smile for the Call and Put options in the binomial model with and without delayed information ($H_{n}=\frac{1}{100} \text{ year}\approx 2.52 \text{ days}$ and $0 \text{ day}$ respectively). The parameters are $\sigma=0.1$, $T=1$, $r=0$, $S_0=40$, and $n=100$}
 \label{smile1}
 \end{figure}

  \begin{figure}
  
   \centering
   \includegraphics[scale=0.75]{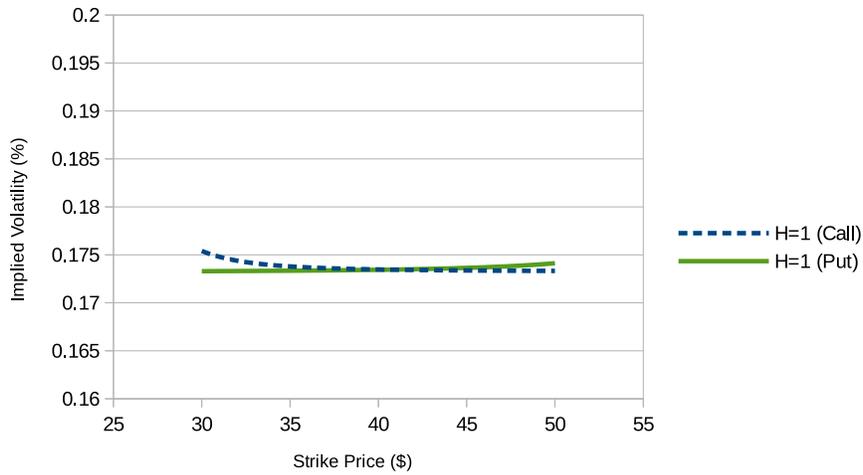}
   \caption{Volatility smile for the Call and Put options in the binomial model with delayed information ($H_{n}=\frac{1}{250,000} \text{ year}\approx 30\text{ seconds}$). The parameters are $\sigma=0.1$, $T=1$, $r=0$, $S_0=40$, and $n=250,000$}
   \label{smile2}
  \end{figure}

  \appendix
  \section{Proof of Lemma \ref{lem1}} \label{prooflem1}
 \begin{proof}
  Note that $\mathbb{Q}_{k}\left(S_N=S_{k-H} u^i d^{\widetilde{N}+H-i} \lvert Z_{k,0}=1\right)$, $i=0,\dots,\widetilde{N}+H$ is the sum of several products of $\widetilde{N}$ elements chosen out of $\{\mathfrak{p}_{u},\mathfrak{p}_{d},\mathfrak{q}_{u},\mathfrak{q}_{d}\}$, and each product term corresponds to a path in the tree starting from the node $S_{k-H}$, and ending in the node $S_{N}=S_{k-H} u^i d^{\widetilde{N}+H-i}$. \par 
  
  Given equations (\ref{lastbigmove}), the last $H+1$ moves need to be either upward or downward, and they contribute to as just one single move. Since it is conditioned on $Z_{k,0}=1$, according to Remark (\ref{moves}), the first element in all of the product terms is either $\mathfrak{q}_{u}$ or $\mathfrak{p}_{u}$. For $H+1 \leq i \leq \widetilde{N}-2$, the last ($H+1$)-period move to $S_{N}=S_{k-H} u^i d^{\widetilde{N}+H-i}$ can be both downward and upward.\par 
  
   In the case that it is upward, we need to consider all the paths starting from $S_{k-H}$ to $S_{N-2}=S_{k-H} u^{i-2} d^{\widetilde{N}+H-i}$ which consist of $i-2$ upward moves and $\widetilde{N}+H-i$ downward ones. There are ${\widetilde{N}+H-2 \choose \i-2}$ of such paths, but these paths are not all equivalent and result in different product terms of $\widetilde{N}$ elements chosen out of $\{\mathfrak{p}_{u},\mathfrak{p}_{d},\mathfrak{q}_{u},\mathfrak{q}_{d}\}$, based on the location of the $\widetilde{N}+H-i$ downward moves in the path. \par 
   
   Note that all paths which have the same number of downward groups result in the same product terms, where a downward group is any number of consecutive downward moves preceded (if any) by 
  an upward move and also succeeded (if any) by an upward move. For example, both of the sequences $\nearrow \searrow \searrow \searrow \nearrow \searrow$ and $\searrow \searrow \nearrow \nearrow \searrow \searrow$ have two groups of $\searrow$ moves. The reason for studying downward groups is that the starting element in all of them is $\mathfrak{q}_{u}$. \par 
  
  In this notation, $j$ corresponds to the number of groups which starts from $1$ (assuming that there exists 
  at least one downward move) and can reach to $\min(i-H,\widetilde{N}+H-i)$. Notice that there are ${\widetilde{N}+H-i-1 \choose j-1}{i-H \choose j}$ paths which have exactly $j$ groups. Therefore, along 
  those path the power of both $\mathfrak{q}_{u}$ and $\mathfrak{p}_{d}$ is $j$ and consequently, the powers of $\mathfrak{q}_{d}$ and $\mathfrak{p}_{u}$ are respectively $\widetilde{N}+H-i-j$ and $i-j-H$. Here $f(i,j)$ in equation (\ref{fij}) corresponds to these paths. \par 
  
  The second case is that the last ($H+1$)-period move is downward. Then, we need to consider all the paths starting from the node $S_{k-H}$ to 
  $S_{N-2}=S_{k-H} u^i d^{\widetilde{N}+H-i-2}$ which consist of $i$ upward moves and $\widetilde{N}+H-i-2$ downward ones. Here not only the number of downward moves is important, but also the direction (upward or 
  downward) of the move from time $N-3$ to $N-2$ is also relevant. \par 
  
  Note that there are ${\widetilde{N}-i-2 \choose j-1}{i \choose j-1}$ paths which have exactly $j$ groups such that the last $1$-period
  move from $N-3$ to $N-2$ is downward, so the corresponding product term is $\mathfrak{q}_{u}^{(j)} \mathfrak{q}_{d}^{(\widetilde{N}-i-j)} \mathfrak{p}_{u}^{(i-j+1)} \mathfrak{p}_{d}^{(j-1)}$, and there are ${\widetilde{N}-i-2 \choose j-1}[{i+1 \choose j}-{i \choose j-1}]$ paths which have exactly $j$ groups such that the last $1$-period move from $N-3$ to $N-2$ is upward. The function $h(i,j)$ in equation (\ref{hij}) takes all these paths into account.\par 
  
  For $H+1 \leq i \leq \widetilde{N}-2$, it is necessary to use both $f(i,j)$ and $h(i,j)$ to take into account that the last ($H+1$)-period move can be both upward and downward. The same reasoning works for $0 \leq i \leq H$ and $\widetilde{N} \leq i \leq \widetilde{N}+H-1$, but here the last ($H+1$)-period move can only be downward for $0 \leq i \leq H$ and upward for $\widetilde{N} \leq i \leq \widetilde{N}+H-1$. For $i=\widetilde{N}-1$ when the last ($H+1$)-period
  move is downward and $i=\widetilde{N}+H$, the functions $f(i,j)$ and $h(i,j)$ cannot be used because in all of the paths from $S_{k-H}$ to $S_{N-2}=S_{k-H} u^{\widetilde{N}+H-2}$, there is not any downward move at all to
  make a downward group (i.e., $j=0$). 
  \end{proof}

\bibliographystyle{apalike}

\end{document}